\DeclareMathOperator*{\argmin}{argmin}
\newtheorem{definition}{Definition}[subsection]
\newtheorem{example}{Example}[subsection]
\theoremstyle{plain}
\newtheorem{theorem}{Theorem}[subsection]
\newtheorem{lemma}{Lemma}[subsection]
\newcommand{\R}{\mathbb{R}}
\newcommand{\norm}[1]{\left\lVert#1\right\rVert}
\newcommand{\abs}[1]{\left|#1\right|}
\DeclareMathOperator\md{MD}
\newcommand{\xhat}[1]{\hat{\bm{x}}^{#1}}
\newcommand{\xhatj}[2]{\hat{x}_{#2}^{#1}}
\title{Multivariate outlier explanations using Shapley values and Mahalanobis distances}
\author{\href{https://orcid.org/0000-0002-3430-8308}{\includegraphics[scale=0.06]{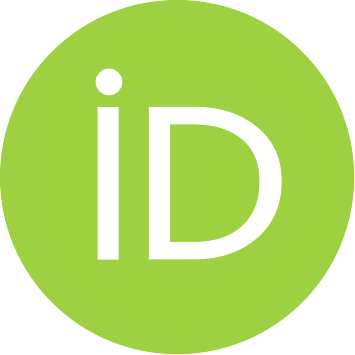}\hspace{1mm}Marcus Mayrhofer}\\
	TU Wien\\
	\texttt{marcus.mayrhofer@tuwien.ac.at} \\
	\And
	\href{https://orcid.org/0000-0002-8014-4682}{\includegraphics[scale=0.06]{orcid.pdf}\hspace{1mm}Peter Filzmoser} \\
	TU Wien\\
	\texttt{peter.filzmoser@tuwien.ac.at} \\
}
\begin{document}
\maketitle

\begin{abstract}
	For the purpose of explaining multivariate outlyingness, it is shown that the squared Mahalanobis distance of an observation can be decomposed into outlyingness contributions originating from single variables. The decomposition is obtained using the Shapley value, a well-known concept from game theory that became popular in the context of Explainable AI. 
    In addition to outlier explanation, this concept also relates to the recent formulation of cellwise outlyingness, where Shapley values can be employed to obtain variable contributions for outlying observations with respect to their ``expected'' position given the multivariate data structure. 
    In combination with squared Mahalanobis distances, Shapley values can be calculated at a low numerical cost, making them even more attractive for outlier interpretation. 
    Simulations and real-world data examples demonstrate the usefulness of these concepts.
\end{abstract}

\keywords{Shapley value \and anomaly detection \and cellwise outliers \and Mahalanobis distance}

\section{Introduction}
\label{section:introduction}

Multivariate outlier detection is a topic of unabated popularity in statistics
and computer science. Not only does there exist a wide variety of approaches but also the terminology varies;
anomaly detection, novelty detection, or fraud detection all refer to the problem of identifying unusual behavior \citep{Zimek2018}. In a dataset with $n$ observations measured at $p$ variables, one is interested in identifying observations that do not conform to their expected behavior according to the remaining (neighboring) observations \citep{Chandola2009, Grubbs1969}.

A widespread tool for the detection of multivariate outliers in statistics 
is based on the Mahalanobis distance \citep{mahalanobis1936}. 
Generally, for an observation vector 
$\bm{x} = (x_1,\ldots ,x_p)'$ from a population with expectation vector
$\bm{\mu}=(\mu_1,\ldots ,\mu_p)'$ and 
covariance matrix $\bm{\Sigma}$, the squared Mahalanobis distance of $\bm{x}$ 
to $\bm{\mu}$ with respect to $\bm{\Sigma}$
is given as
\begin{equation}
\label{eq:MD2}
\md_{\bm{\mu},\bm{\Sigma}}^2(\bm{x}) = (\bm{x}-\bm{\mu})'\bm{\Sigma}^{-1}(\bm{x}-\bm{\mu}) ,
\end{equation}
which will be denoted in the following by $\md^2(\bm{x})$.
To specify the outlyingness of an observation from a given sample, the parameters $\bm{\mu}$ and $\bm{\Sigma}$ need to be estimated, with their estimators being denoted as $\hat{\bm{\mu}}$ and $\hat{\bm{\Sigma}}$.
If the underlying distribution is a multivariate normal distribution, it is 
common to use the 0.975 quantile of a chi-square distribution with $p$ degrees of freedom $\chi_{p;0.975}^2$ as a cutoff value \citep{Rousseeuw1990}. Observations with a squared Mahalanobis
distance exceeding this cutoff are identified as multivariate outliers.
It is evident that for outlier detection, the estimates $\hat{\bm{\mu}}$ and $\hat{\bm{\Sigma}}$ themselves must be robust against such outliers. 
Many different proposals for robust estimation of multivariate location and covariance can be found throughout the literature, with one of the most popular being the minimum covariance determinant (MCD) estimator~\citep{Rousseeuw1985}.

The squared Mahalanobis distance from Equation~\eqref{eq:MD2} can also be 
written as
\begin{equation}
\label{eq:MD2element}
\md^2(\bm{x}) =
\sum_{j=1}^p\sum_{k=1}^p (x_j-\mu_j)(x_k-\mu_k)\omega_{jk} ,
\end{equation}
where $\omega_{jk}$ denotes the element $(j,k)$ of the precision matrix $\bm{\Omega} = \bm{\Sigma}^{-1}$.
This outlyingness measure collects distance contributions of
all pairwise variable combinations, weighted by $\omega_{jk}$, resulting in a single number. However, this value cannot be interpreted in the sense of contributions from individual variables, which would be vital for determining the effect of the single variables on the overall outlyingness. 

Analyzing the contributions of individual variables is also of major interest in
Explainable Artificial Intelligence, which is often referred to as Interpretable Machine Learning. For example, suppose a ``black-box'' classifier has been trained on a dataset; it is often essential to know how and why the individual variables of an observation contribute to the model's decision to assign an observation to a particular class \citep{Ribeiro2016}. Various tools have been established for this purpose, and
Shapley values are among the most popular ones. 
Although the Shapley value~\citep{shapley1953} was originally proposed in the context of game theory in \citeyear{shapley1953}, it was applied much later in the context of machine learning by \cite{strumbelj2010, strumbelj2014} and its popularity increased greatly after the publications of \cite{lundberg2017, lundberg2019, lundberg2020}. For a more exhaustive discussion of these methods, we refer to \cite{molnar2019} and \cite{brzemyslaw_2021}. 

In this paper, we propose using the Shapley value for multivariate outlier explanation, which will be directly based on the squared Mahalanobis distance. Our method allows us to determine the individual variable contributions to the outlyingness and to answer the question \textit{why} an observation is flagged as a multivariate outlier.
The arguably most critical disadvantage of the Shapley values in a general setting is their high computational complexity, which exponentially increases with the number of variables. 
However, we will show that the Shapley values resulting from our approach can be expressed as a simplified problem, which substantially facilitates their computation, even in a higher dimension. 
In addition, we present an extension of this concept that enables the assignment of outlyingness scores to pairs of variables, allowing the evaluation of interaction effects. 

It should be mentioned that an alternative approach to answer which variables contribute the most to the multivariate outlyingness of an observation has been presented by \cite{debruyne2019}, who estimate the univariate direction of maximum outlyingness using sparse regression. Nevertheless, this method does not result in an additive decomposition of the squared Mahalanobis distance.

Another approach closely related to outlier explanation is called
cellwise outlier detection. For an overview of this relatively recent research field, we refer to \cite{Raymaekers2021}. Its main idea is to investigate the outlyingness
of each cell of a data matrix instead of focusing on entire observations. 
In general terms, cellwise outlyingness is based on the difference of the actual value of a cell compared to the value we would have expected.

Computing the amount by which a cell is anomalous is also related to multivariate outlier explanation, although the approach is somehow reversed:
The explanations are the result of evaluating the single coordinates of observations and not the result of interpreting the outlyingness of observations in terms of the individual coordinate contributions, which take the covariance structure into account.

The remainder of this paper is structured as follows: In Section~\ref{section:method} we introduce Shapley values before we derive in detail how to apply them for multivariate outlier explanation using squared Mahalanobis distances. Moreover, we outline how to combine those results with the concept of cellwise outlier detection, leading to the cellwise robust outlier explanation algorithms described in Section~\ref{section:algorithm}. The performance of those outlier explanation tools for cellwise outlier detection is demonstrated via the numerical experiments presented in Section~\ref{section:simulations}. In Section \ref{section:applications} we analyze the performance of our method on real-world examples. The final Section~\ref{section:conclusion} summarizes the key points of our findings.

\section{Shapley values for outlier explanation}
\label{section:method}

In the following, we propose a method for the interpretation of multivariate outliers that combines squared Mahalanobis distances with Shapley values \citep{shapley1953}. The concept of Shapley values is briefly introduced based on its nascent field of research, namely cooperative game theory \citep{peters2008}. 

\subsection{Shapley values and cooperative game theory}
\label{subsection:shapley_values_and_cooperative_game_theory}

In cooperative game theory, players can form coalitions that produce a payoff and decide on how their coalitions' proceeds are distributed among them. 

\begin{definition}
A coalitional (cooperative) game with transferable utility (TU-game) $(T,v)$ is given by a set of players $T = \{1,2,\ldots,t\}$ and the characteristic function $v$, which assigns the worth $v(S) \in \R$ to each coalition $S \subseteq T$, such that $v(\emptyset) = 0$.
\label{def:coalitional_game}
\end{definition}

In other words, the function $v$ tells us how much collective payoff a coalition $S$ of players can gain by cooperating. A payoff distribution for the grand coalition $T$ is given by $\bm{\varphi}(v) = (\varphi_1(v),\ldots,\varphi_t(v))'$, where $\varphi_j(v) \in \R$ is the payoff to player $j$. 
There are several proposals on how the payoff should be assigned to the players $j\in T$ to obtain a \textit{fair} distribution. While there are different concepts and notions of fairness in the literature, we will focus on the one introduced by \cite{shapley1953}. The \textit{Shapley value} $\bm{\phi}(v)$, with coordinates
\begin{align}
    \phi_j(v) = \sum_{S \subseteq T \setminus \{j\}} \frac{\abs{S}!(t-\abs{S}-1)!}{t!}\left(v(S\cup\{j\})-v(S)\right),
    \label{eq:shapley_value}
\end{align}
is the \textit{unique} payoff distribution that fulfills 
the following conditions~\citep{young1985}:
\begin{itemize}
\item{\textit{Efficiency:}}
The payoff to individual players $\varphi_j(v)$ must add up to the worth of the grand coalition $v(T)$, hence $\sum_{j = 1}^p \varphi_j(v) = v(T)$.
\item{\textit{Symmetry:}}
If $v(S \cup \{j\}) = v(S \cup \{k\})$ holds for all $S \subseteq T\setminus \{j,k\}$ for two players $j$ and $k$, then $\varphi_j(v) = \varphi_k(v)$.
\item{\textit{Monotonicity:}}
If for any two games $(T,v_1)$ and $(T,v_2)$ and all $S \subseteq T$ the condition 
\begin{align*}
v_1(S \cup \{j\}) - v_1(S) \geq v_2(S \cup \{j\}) - v_2(S)
\end{align*}
is satisfied, then $\phi_j(v_1) \geq \phi_j(v_2)$.
\end{itemize}
Therefore, the Shapley value permits the definition of a fair payoff distribution for the grand coalition $T$.
The term $v(S\cup\{j\})-v(S)$ describes the marginal contribution
of player $j$ to a coalition $S$. The corresponding Shapley value
$\phi_j(v)$ is then given as the weighted
mean of the marginal contributions formed over all possible
different coalitions. 

\subsection{Linking Shapley value and Mahalanobis distance}

Let us consider an observation vector $\bm{x} = (x_1,\ldots ,x_p)'$ from a population with expectation vector $\bm{\mu} = (\mu_1,\ldots ,\mu_p)'$ and covariance matrix $\bm{\Sigma}$.
We would like to investigate the contribution of the $j$-th
coordinate $x_j$ to the outlyingness of $\bm{x}$. 
The set of players is denoted as $P = \{1,\ldots ,p\}$,
and it contains the indices of all variables.
A coalition $S$ is formed by a subset of $P$.
We define the characteristic function $v$ mentioned above as the squared
Mahalanobis distance
\begin{equation}
    \md_{\bm{\mu},\bm{\Sigma}}^2(\xhat{S}) = \md^2(\xhat{S}) \label{eq:characteristic_function}
\end{equation} 
with $\xhat{S}=(\xhatj{S}{1},\ldots ,\xhatj{S}{p})'$ and
\begin{equation}
\label{eq:definexj}
\xhatj{S}{j}:= \begin{cases}
        x_j & \text{if } j \in S\\
        \mu_j & \text{if } j \notin S
    \end{cases},
\end{equation}
which fulfills $\md^2(\xhat{S}) = 0$, if $S=\emptyset$ is the empty set.

In this setting, the $k$-th coordinate of the Shapley value from 
Equation~\eqref{eq:shapley_value} is given as the weighted average of the marginal contributions 
\begin{equation*}
    \Delta_k \md^2(\xhat{S}) := \md^2(\xhat{S\cup\{k\}})-\md^2(\xhat{S})
\end{equation*} 
over all $2^{p-1}$ subsets $S \subseteq P \setminus \{k\}$. This suggests an exponential computational complexity, which becomes costly, especially if $p$ is large. However, the following theorem shows that this highly demanding problem can be reduced to linear complexity.
\begin{theorem} \label{theorem:explain_mahalanobis}
    Given two vectors $\bm{x}, \bm{\mu} \in \R^{p}$ and a non-singular matrix $\bm{\Sigma}  \in \R^{p \times p}$, the contribution of the $k$-th 
    variable to the squared Mahalanobis distance $\md^2(\bm{x})$ based on the Shapley value is given by 
    \begin{align}
             \phi_k(\bm{x},\bm{\mu},\bm{\Sigma}) &:= \sum_{S \subseteq P\setminus\{k\}} \frac{\abs{S}!(p-\abs{S}-1)!}{p!}\Delta_k \md^2(\xhat{S}) \label{eq:shapley_md_long}\\
             &= (x_k-\mu_k) \sum_{j =1}^p (x_j-\mu_j) \omega_{jk}, \label{eq:shapley_md}
    \end{align}
    with $\bm{\Sigma}^{-1} =: \bm{\Omega} = (\omega_{jk})_{j,k = 1,\ldots ,p}$ and $\hat{\bm{x}}^S$ as in Equation~\eqref{eq:definexj}. 
\end{theorem}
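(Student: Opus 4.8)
The plan is to evaluate the weighted sum \eqref{eq:shapley_md_long} directly: first obtain a closed form for the marginal contributions $\Delta_k\md^2(\xhat{S})$, then collapse the resulting $2^{p-1}$ terms using two elementary binomial identities. Write $y_j := x_j-\mu_j$ and $w(s) := \tfrac{s!\,(p-s-1)!}{p!}$, and use below that $\bm{\Omega}=\bm{\Omega}'$ (as holds for a covariance matrix). By \eqref{eq:MD2element} together with \eqref{eq:definexj}, every summand carrying an index outside $S$ vanishes, so the characteristic function restricts to the ``active'' coordinates:
\begin{equation*}
\md^2(\xhat{S}) \;=\; \sum_{j\in S}\sum_{l\in S} y_j\,y_l\,\omega_{jl}.
\end{equation*}

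Next I would read off the marginal contribution. For $k\notin S$, the monomials present in $\md^2(\xhat{S\cup\{k\}})$ but absent from $\md^2(\xhat{S})$ are exactly those involving the index $k$, so, after using symmetry of $\bm{\Omega}$ to merge the two cross terms,
\begin{equation*}
\Delta_k\md^2(\xhat{S}) \;=\; y_k^2\,\omega_{kk} \;+\; 2\,y_k\sum_{l\in S} y_l\,\omega_{kl}.
\end{equation*}
The key observation is that this splits into one part independent of $S$ and one part that is \emph{additive over the elements of $S$}; because $\md^2$ is a quadratic form there is no higher-order dependence on $S$, and this is precisely what lets the exponential sum collapse to linear size.

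Then I would substitute this expression into \eqref{eq:shapley_md_long} and interchange summation orders. The $S$-independent part yields $y_k^2\,\omega_{kk}\sum_{S\subseteq P\setminus\{k\}} w(\abs{S})$; grouping subsets by cardinality $s$ and using $\binom{p-1}{s}w(s)=\tfrac1p$ gives $\sum_{s=0}^{p-1}\tfrac1p=1$ for this factor. The additive part yields $2\,y_k\sum_{l\in P\setminus\{k\}} y_l\,\omega_{kl}\big(\sum_{S\ni l} w(\abs{S})\big)$, where the inner sum runs over $S\subseteq P\setminus\{k\}$ containing $l$; there are $\binom{p-2}{s-1}$ such $S$ of cardinality $s$, and $\binom{p-2}{s-1}w(s)=\tfrac{s}{p(p-1)}$, so the inner sum equals $\tfrac{1}{p(p-1)}\sum_{s=1}^{p-1}s=\tfrac12$. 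Combining, $\phi_k(\bm{x},\bm{\mu},\bm{\Sigma}) = y_k^2\,\omega_{kk} + y_k\sum_{l\ne k} y_l\,\omega_{kl} = y_k\sum_{j=1}^{p} y_j\,\omega_{jk}$, which is \eqref{eq:shapley_md}.

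The only real work is the bookkeeping in those two binomial identities; conceptually everything rests on the quadratic (hence at most pairwise) interaction structure of $\md^2$. As a cross-check, efficiency of the Shapley value is immediate: $\sum_k \phi_k = \sum_{j,k} y_j y_k \omega_{jk} = (\bm{x}-\bm{\mu})'\bm{\Omega}(\bm{x}-\bm{\mu}) = \md^2(\bm{x})$. A slicker alternative I would note as a remark uses linearity of the Shapley value in the characteristic function: $\md^2(\xhat{S}) = \sum_{j}\omega_{jj}y_j^2\,u_{\{j\}}(S) + \sum_{j\ne l}\omega_{jl}y_jy_l\,u_{\{j,l\}}(S)$ is a linear combination of unanimity games $u_A(S)=\I[A\subseteq S]$, and substituting the standard values $\phi_k(u_A)=\abs{A}^{-1}\I[k\in A]$ reproduces \eqref{eq:shapley_md} immediately, at the cost of invoking that one fact instead of the binomial sums.
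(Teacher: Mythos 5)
Your proposal is correct and follows essentially the same route as the paper: you first restrict $\md^2(\xhat{S})$ to the active coordinates and obtain the closed form $\Delta_k\md^2(\xhat{S}) = (x_k-\mu_k)^2\omega_{kk} + 2(x_k-\mu_k)\sum_{l\in S}(x_l-\mu_l)\omega_{kl}$ (the paper's Lemma~\ref{lemma:shapley_difference}), then collapse the weighted sum by cardinality using exactly the same binomial identities $\binom{p-1}{s}w(s)=1/p$ and $\binom{p-2}{s-1}w(s)=s/(p(p-1))$. Your closing remark via unanimity games is a nice, genuinely shorter alternative not in the paper, but your main argument and the paper's proof coincide.
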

\begin{proof}
    The proof of this theorem is given in \ref{section:proof_explain_mahalanobis}.
\end{proof}

Indeed, we can compute the expression of Equation~(\ref{eq:shapley_md}) as an intermediate result when we compute the squared Mahalanobis distance, see Equation~\eqref{eq:MD2element}.

The Shapley value of an observation $\bm{x}$ resulting from Theorem~\ref{theorem:explain_mahalanobis} is given by the vector
\begin{equation}
    \bm{\phi}(\bm{x},\bm{\mu},\bm{\Sigma})=(\phi_1(\bm{x},\bm{\mu},\bm{\Sigma}),\ldots ,\phi_p(\bm{x},\bm{\mu},\bm{\Sigma}))'\label{eq:MDV}
\end{equation}
and we will simply denote it as $\bm{\phi}(\bm{x}) = (\phi_1(\bm{x}),...,\phi_p(\bm{x}))'$, whenever the (robustly estimated) mean and covariance matrix are employed for its computation.
Considering Theorem~\ref{theorem:explain_mahalanobis}, it is straightforward to see that 
\begin{align}
    \bm{\phi}(\bm{x}) &=  (\bm{x}-\bm{\mu}) \circ \bm{\Sigma}^{-1} (\bm{x}-\bm{\mu}) , \label{eq:ShapleyMatrix}
\end{align}
where $\circ$ denotes the element-wise product. 

Since $\bm{\phi}(\bm{x})$ is based on the Shapley value, it is the only decomposition of the squared Mahalanobis distance with the characteristic function defined in  Equation~\eqref{eq:characteristic_function} that fulfills
the following properties:
\begin{itemize}
\item{\textit{Efficiency:}}
The contributions $\phi_j(\bm{x})$, for $j = 1,\ldots ,p$, sum up to the squared Mahalanobis distance of $\bm{x}$, hence 
\begin{equation}
\label{eq:efficiency}
\sum_{j = 1}^p \phi_j(\bm{x}) = \md^2(\bm{x}).
\end{equation}
\item{\textit{Symmetry:}}
If $\md^2(\xhat{S \cup \{j\}}) = \md^2(\xhat{S \cup \{k\}})$ holds for all subsets $S \subseteq P\setminus \{j,k\}$
for two coordinates $j$ and $k$, then $\phi_j(\bm{x}) = \phi_k(\bm{x})$.
\item{\textit{Monotonicity:}}
Let $\bm{\mu}, \tilde{\bm{\mu}} \in \R^p$ be two vectors and $\bm{\Sigma}, \tilde{\bm{\Sigma}} \in \R^{p \times p}$ be two non-singular matrices. If 
\begin{align*}
        \md_{\bm{\mu},\bm{\Sigma}}^2(\xhat{S \cup \{j\}}) - \md_{\bm{\mu},\bm{\Sigma}}^2(\xhat{S}) \geq \md_{\tilde{\bm{\mu}},\tilde{\bm{\Sigma}}}^2(\xhat{S \cup \{j\}}) - \md_{\tilde{\bm{\mu}},\tilde{\bm{\Sigma}}}^2(\xhat{S})
\end{align*}
holds for all subsets $S \subseteq P$, then $\phi_j(\bm{x},\bm{\mu},\bm{\Sigma}) \geq \phi_j(\bm{x},\tilde{\bm{\mu}},\tilde{\bm{\Sigma}})$.
\end{itemize}

A single coordinate $\phi_j(\bm{x})$ of the Shapley value defined in Theorem \ref{theorem:explain_mahalanobis} can be interpreted as the average marginal contribution of the $j$-th variable to the squared Mahalanobis distance of an individual observation $\bm{x}$. 
Thus, $\bm{\phi}(\bm{x})$ decomposes $\md^2(\bm{x})$ into contributions originating from the
single variables. The simplified form presented in the theorem reveals that the larger the value $\phi_j$, the stronger the contribution of the $j$-th coordinate to $\md^2(\bm{x})$. 
It should be noted that the contributions can also be negative, as illustrated in Figure~\ref{fig:2_dim_shapley}.

\begin{figure}[ht!]
    \centering
    \includegraphics[width=0.85\linewidth]{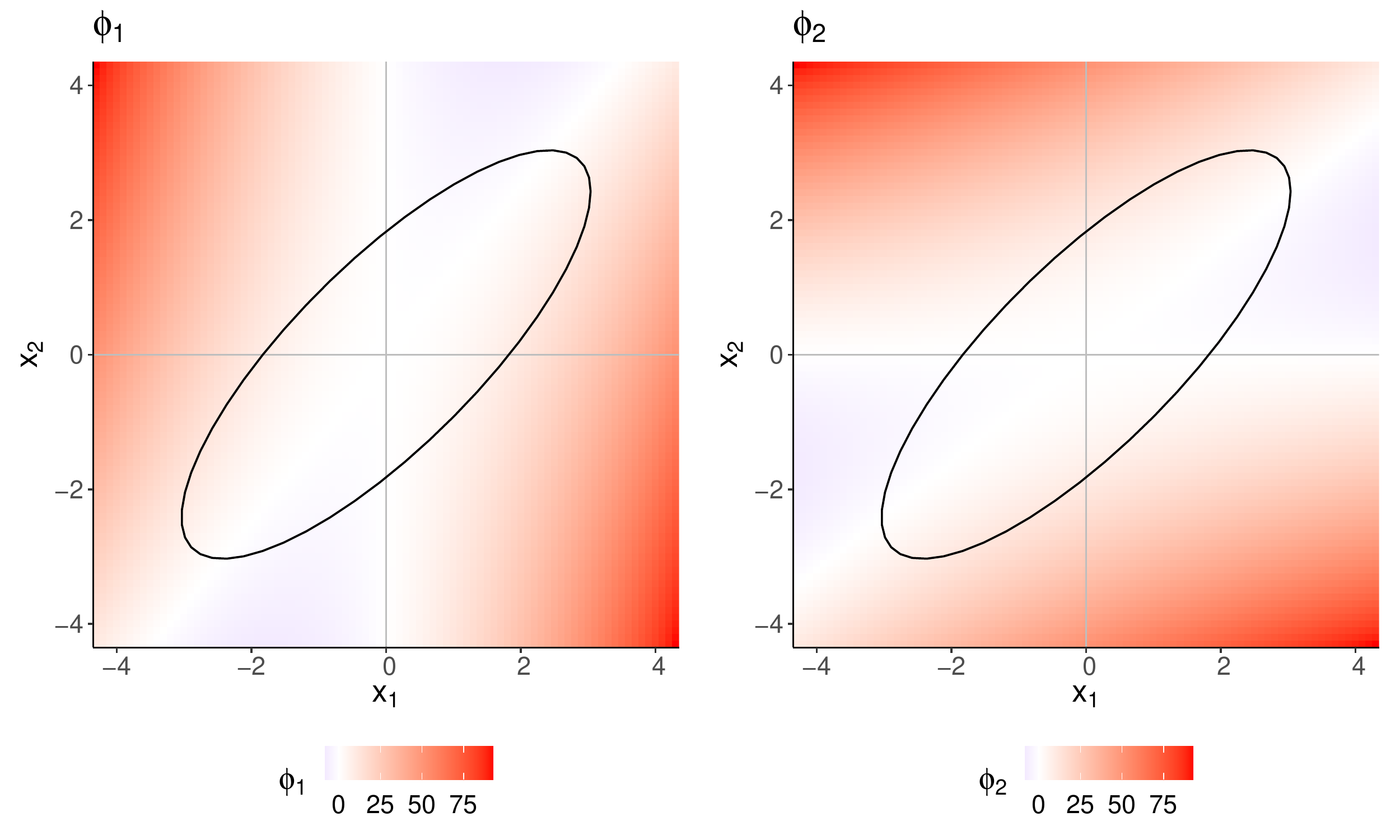}
    \caption{Plots illustrating a two-dimensional visualization of the Shapley values $\bm{\phi}(\bm{x})$ for $\bm{x} \in [-4,4]\times[-4,4]$ with mean $\bm{\mu} = (0,0)'$ and covariance matrix $\bm{\Sigma}$, with elements $\sigma_{12} = \sigma_{21} = 0.8$, and $\sigma_{11} = \sigma_{22} = 1$.
    The graphs are colored according to the components $\phi_1(\bm{x})$ and $\phi_2(\bm{x})$ of the Shapley value, respectively, and both panels show the 99-percentile confidence ellipse.}
    \label{fig:2_dim_shapley}
\end{figure}

\medskip
\noindent
\textbf{Remark:}
The definition given in Equation~\eqref{eq:definexj}, where $\xhatj{S}{j}=\mu_j$ if $j\notin S$, could also be
modified. In the literature, it is often suggested to use the conditional expectation of $x_j$, given all other variables with index contained in $S$, instead of the expected value \citep{lundberg2017}.
However, evaluating the conditional expectation explicitly requires us to impose distributional assumptions or to apply approximation techniques, and this may also lead to high
computational complexity, as for every coordinate $j  \in P$ there are
$2^{p-1}$ possible subsets $S$. Our definition of $\xhat{S}$ results in two major advantages:
\begin{enumerate}
    \item The computational complexity of computing the Shapley value reduces
    from an exponential to a linear one; see Theorem~\ref{theorem:explain_mahalanobis}.
    \item 
    For any $S$ and the resulting $\xhat{S}$, the definition of Equation~\eqref{eq:definexj} results in the fact that
    $\md_{\bm{\mu},\bm{\Sigma}}^2(\xhat{S})$
    is identical to
    $\md_{{\bm{\mu}_S},{\bm{\Sigma}}_S}^2(\bm{x}_S)$,
    where $\bm{x}_S = (x_j)_{j \in S}$ and 
    ${\bm{\mu}}_S = (\mu_j)_{j \in S}$ only consist of the coordinates of $\bm{x}$
    and $\bm{\mu}$ contained in the set $S$, respectively, and 
    ${\bm{\Sigma}}_S$ is the submatrix of $\bm{\Sigma}$ with
    rows and columns included in $S$.
    Therefore, analyzing the outlyingness of $\xhat{S}$ using the squared Mahalanobis distance is equivalent to an analysis of the outlyingness of the 
    lower dimensional version $\bm{x}_S$, see
    also Equation~\eqref{eq:MD2element}.
\end{enumerate}

\subsection{Shapley interaction values}

The Shapley value can also be generalized via the Shapley interaction index~\citep{Grabisch1999,Fujimoto2006}, which is also used in the field of Explainable AI~\citep{lundberg2019}. As the name suggests, the idea is not only to investigate individual variable contributions but also to obtain a measure of interaction between the variables.

Using the notation of cooperative game theory as in 
Section~\ref{subsection:shapley_values_and_cooperative_game_theory}, the Shapley interaction index for $S$, with fixed $\abs{S} = s$, is given by
\begin{align}
    I_{Sh}(v,S)
    = \sum_{T \subseteq P \setminus S} \frac{t!(p-t-s)!}{(p-s+1)!} \Delta_S v(T)\label{eq:shapley_interaction},
\end{align}
with $t = \abs{T}$, and $\Delta_S v(T) = \sum_{L\subseteq S}(-1)^{s-l} v(T\cup L), l = \abs{L}$, also known as the \textit{discrete} or \textit{set function derivative} \citep{Grabisch2016}. We refer to the previously mentioned articles of \cite{Grabisch1999,Fujimoto2006} for more details regarding the theory and properties connected to this concept. 

As before, we decompose the squared Mahalanobis distance using the characteristic function defined in Equation~\eqref{eq:characteristic_function}.
Moreover, we only focus on the \textit{pairwise} Shapley interaction index $(\abs{S} = 2)$, because higher order Shapley interaction indices $(\abs{S} \geq 3)$ turn out to be zero in this setting (see \ref{subsection:proof_explain_mahalanobis_interaction} for a proof).

\begin{theorem} \label{theorem:explain_mahalanobis_interaction}
    Given two vectors $\bm{x}, \bm{\mu} \in \R^{p}$ and a non-singular matrix $\bm{\Sigma}  \in \R^{p \times p}$, the pairwise contributions of the variable pair $(j,k)$ of an observation $\bm{x}$ to the squared Mahalanobis distance $\md^2(\bm{x})$, based on the Shapley interaction index as defined in Equation~\eqref{eq:shapley_interaction}, are collected in the matrix $\bm{\Phi}(\bm{x}) = \bm{\Phi}(\bm{x}, \bm{\mu}, \bm{\Sigma})$, where the off-diagonal elements are given by
    \begin{align}
        \Phi_{jk}(\bm{x}) &:= \sum_{T \subseteq P \setminus \{j,k\}} \frac{t!(p-t-2)!}{(p-1)!} \Delta_{\{j,k\}} \md^2(\xhat{T}) \label{eq:shapley_interaction_long}\\
        &= 2(x_j - \mu_j)(x_k-\mu_k)\omega_{jk} \label{eq:shapley_interaction_simplified},
    \end{align}
    with $ \Delta_{\{j,k\}} \md^2(\xhat{T}) = \md^2(\xhat{T \cup \{j,k\}}) - \md^2(\xhat{T\cup \{j\}}) - \md^2(\xhat{T\cup \{k\}}) + \md^2(\xhat{T})$. The diagonal elements are defined as 
    \begin{align}
        \Phi_{jj}(\bm{x}) &:= \phi_j(\bm{x}) - \sum_{k \neq j} \Phi_{jk}(\bm{x}) \\
        &= (x_j-\mu_j)^2 \omega_{jj} - (x_j-\mu_j)\sum_{k \neq j} (x_k - \mu_k) \omega_{jk}, \label{eq:shapley_interaction_long_diagonal}
    \end{align} 
    where $\phi_j(\bm{x})$ is the $j$-th coordinate of the Shapley value as in Theorem \ref{theorem:explain_mahalanobis}.
\end{theorem}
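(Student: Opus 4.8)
The plan is to follow the same route as for Theorem~\ref{theorem:explain_mahalanobis}: rewrite the characteristic function as a quadratic form in the subset variables, evaluate the pairwise set-function derivative $\Delta_{\{j,k\}}\md^2(\xhat{T})$ in closed form, and then exploit the fact that this derivative turns out to be \emph{constant} in $T$, so that the weighted average in Equation~\eqref{eq:shapley_interaction} collapses to a single term. Writing $y_a := x_a-\mu_a$ and combining the definition of $\xhat{S}$ in Equation~\eqref{eq:definexj} with Equation~\eqref{eq:MD2element}, every coordinate of $\xhat{S}$ with index outside $S$ equals $\mu_j$, so the corresponding summands vanish and $v(S) := \md^2(\xhat{S}) = \sum_{a\in S}\sum_{b\in S} y_a y_b\,\omega_{ab}$.

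Next I would compute $\Delta_{\{j,k\}} v(T) = v(T\cup\{j,k\}) - v(T\cup\{j\}) - v(T\cup\{k\}) + v(T)$ for $T\subseteq P\setminus\{j,k\}$ directly, by tracking which index pairs $(a,b)$ survive the four-term alternating sum. Pairs with both indices in $T$ get coefficient $1-1-1+1=0$; a pair containing $j$ (resp.\ $k$) together with one further index in $T$ cancels between $v(T\cup\{j,k\})$ and $v(T\cup\{j\})$ (resp.\ $v(T\cup\{k\})$); and the ``diagonal'' pairs $(j,j)$ and $(k,k)$ cancel in the same way. What is left are only the two cross pairs $(j,k)$ and $(k,j)$, which appear solely in $v(T\cup\{j,k\})$, so by symmetry of $\bm{\Omega}$ one obtains $\Delta_{\{j,k\}} v(T) = 2\,y_j y_k\,\omega_{jk}$, independently of $T$.

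Because this derivative does not depend on $T$, it factors out of the sum in Equation~\eqref{eq:shapley_interaction} with $\abs{S}=2$, and it only remains to verify that the weights sum to one; grouping the $\binom{p-2}{t}$ subsets of each size $t$, this is the standard identity $\sum_{t=0}^{p-2}\binom{p-2}{t}\tfrac{t!(p-t-2)!}{(p-1)!} = \sum_{t=0}^{p-2}\tfrac{1}{p-1} = 1$, which yields Equation~\eqref{eq:shapley_interaction_simplified}. For the diagonal I would then simply substitute this formula together with the closed form $\phi_j(\bm{x}) = y_j\sum_{l=1}^p y_l\,\omega_{lj}$ from Theorem~\ref{theorem:explain_mahalanobis} into the defining relation $\Phi_{jj}(\bm{x}) = \phi_j(\bm{x}) - \sum_{k\neq j}\Phi_{jk}(\bm{x})$: splitting off the $l=j$ term of $\phi_j$ gives $y_j^2\,\omega_{jj}$, and $y_j\sum_{k\neq j} y_k\,\omega_{jk} - \sum_{k\neq j} 2\,y_j y_k\,\omega_{jk} = -\,y_j\sum_{k\neq j} y_k\,\omega_{jk}$, which is exactly Equation~\eqref{eq:shapley_interaction_long_diagonal}.

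The only genuinely delicate part is the bookkeeping in the second step — getting the cancellation pattern of the four-term alternating sum exactly right, in particular not overlooking the diagonal pairs $(j,j)$, $(k,k)$ and the pairs that share a single index with $\{j,k\}$. Once $\Delta_{\{j,k\}} v(T)$ is identified with the constant $2\,y_j y_k\,\omega_{jk}$, everything else is a routine binomial identity and a one-line substitution. (The very same observation — that $v$ is a purely quadratic set function — is what forces all interaction indices of order $\abs{S}\ge 3$ to vanish, which justifies restricting attention to pairs; that argument is deferred to~\ref{subsection:proof_explain_mahalanobis_interaction}.)
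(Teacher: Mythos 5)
Your proposal is correct and takes essentially the same route as the paper: both compute $\Delta_{\{j,k\}}\md^2(\xhat{T})=2(x_j-\mu_j)(x_k-\mu_k)\omega_{jk}$ independently of $T$, factor this constant out of the weighted sum using the fact that the weights add to one, and obtain the diagonal entries by substituting the closed forms into $\Phi_{jj}(\bm{x})=\phi_j(\bm{x})-\sum_{k\neq j}\Phi_{jk}(\bm{x})$. The only minor difference is that you derive the key identity by directly tracking cancellations in the quadratic set function $v(S)=\sum_{a\in S}\sum_{b\in S}(x_a-\mu_a)(x_b-\mu_b)\omega_{ab}$, whereas the paper obtains it by applying Lemma~\ref{lemma:shapley_difference} twice; both derivations are sound.
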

\begin{proof}
    The proof of this theorem is given in~\ref{section:proof_explain_mahalanobis_interaction}.
\end{proof}

This theorem reveals that the computation of the Shapley interaction index for pairwise contributions is straightforward. 
It also shows that $\Phi_{jk}(\bm{x})$ measures the marginal deviation of the $j$-th and the $k$-th coordinate from their mean, weighted by the corresponding entry of the precision matrix.
The sign of $\Phi_{jk}(\bm{x})$ not only depends on the sign of $\omega_{jk}$, but also on 
the signs of $(x_j - \mu_j)$ and $(x_k-\mu_k)$, as illustrated in Figure~\ref{fig:2_dim_visuals_interaction_md}. 

\begin{figure}[ht!]
    \centering
    \includegraphics[width=0.85\linewidth]{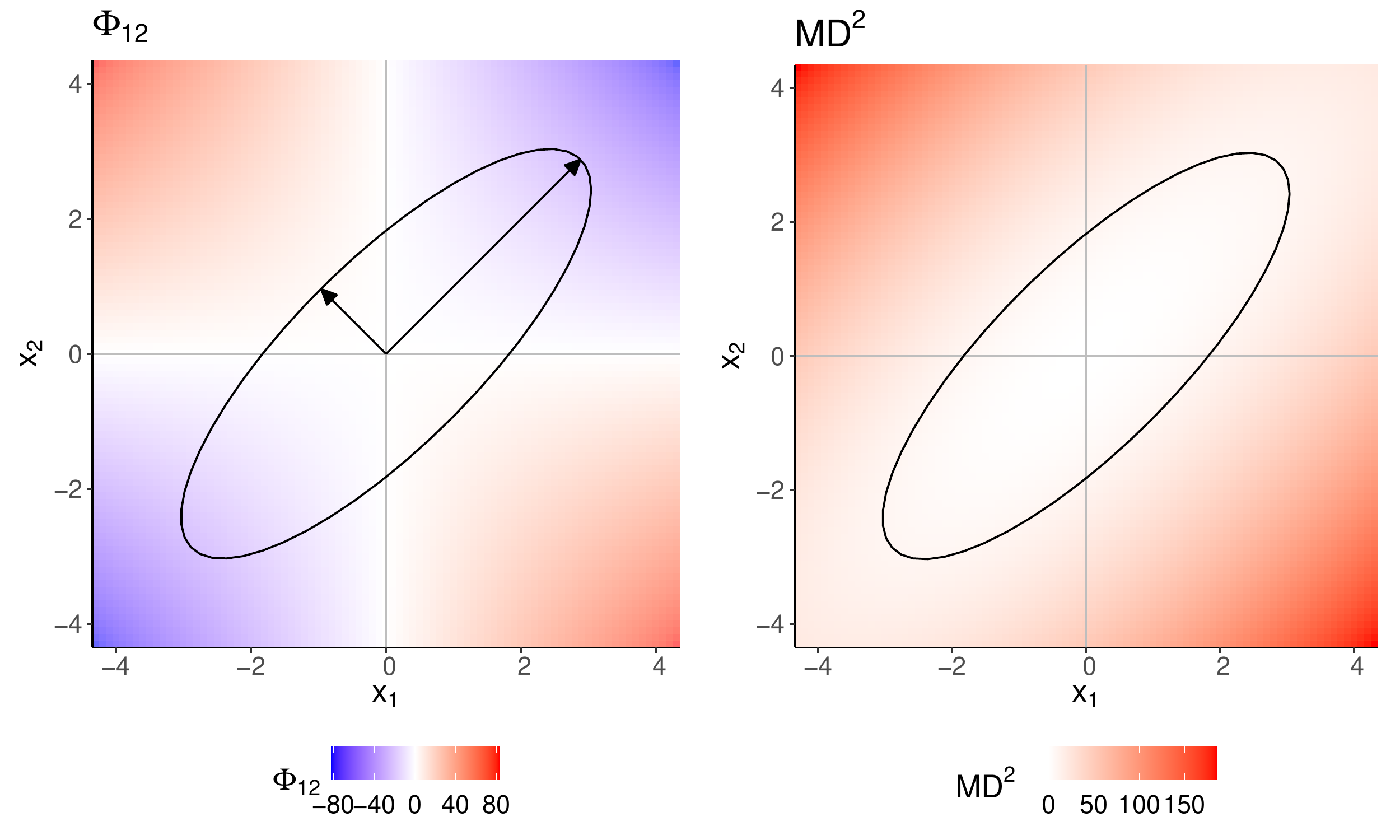}
    \caption{Using the same setup as for the example described in Figure~\ref{fig:2_dim_shapley}, the pairwise contributions $\Phi_{12}$ of $x_1$ and $x_2$ to the squared Mahalanobis distance are visualized in the left panel. In this simple two-dimensional example, we can see that the pairwise contributions are highest in the direction of the eigenvector of $\bm{\Sigma}$ with the smallest eigenvalue. Hence, observations with a large multivariate outlyingness and a small univariate outlyingness are assigned high pairwise outlyingness scores $\Phi_{12}$. In the right graph, we display the squared Mahalanobis distance, and both panels include the 99-percentile confidence ellipse.}
    \label{fig:2_dim_visuals_interaction_md}
\end{figure}

The definition of the diagonal elements $\Phi_{jj}(\bm{x})$ is chosen such that a generalization of the \textit{Efficiency} property given in Equation~\eqref{eq:efficiency} is possible: 
\begin{align*}
    \phi_j(\bm{x}) = \sum_{k=1}^p \Phi_{jk}(\bm{x}) \quad \text{and} \quad \md^2(\bm{x}) = \sum_{j=1}^p\sum_{k=1}^p \Phi_{jk}(\bm{x}).
\end{align*}
Thus, the Shapley values for every variable can be decomposed into pairwise
interactions with the remaining variables.

It is worth mentioning that there are other suggestions on how to generalize the Shapley value such that an explicit definition of $\Phi_{jj}(\bm{x}), j = 1, \ldots, p$, is not necessary \citep[e.g.][\emph{Shapley-Taylor interaction index}]{sundararajan2020shapley}.

\section{Cellwise robust outlier explanation}
\label{section:algorithm}

Cellwise outlier detection focuses on identifying unusual \textit{cells} rather than rows in a data matrix. Such a procedure is particularly justified when dealing with datasets containing many variables: If only individual cells of an observation are contaminated, then the majority of non-contaminated cells still contains valuable information that should not be discarded. Moreover, already a small proportion of outlying cells spread out over the whole data matrix could, in a rowwise treatment, soon lead to a setting where the majority of observations are considered as traditional rowwise outliers. 
The cellwise contamination model has first been formalized by \citet{alqallaf2009propagation}. Several papers that build on this concept are discussed  in \citet{Raymaekers2021}, they also introduce a novel procedure for cellwise outlier identification.

As already outlined in Section~\ref{section:introduction}, the key objective of this work concerns the explanation of multivariate outliers based on the Shapley value for given (or appropriately estimated\footnote{The proportion of contaminated rows may quickly exceed $50\%$ in the cellwise contamination setting \citep{alqallaf2009propagation}. However, rowwise robust methods can only deal with settings where at least half of the observations are not corrupted. To obtain initial cellwise robust covariance estimates, the 2SGS approach of \cite{Agostinelli2015}, the DDC method of \cite{Rousseeuw2018}, or the cellMCD estimator of \cite{Raymaekers2022} can be used.}) 
parameters $\bm{\mu}$ and $\bm{\Sigma}$. Since it enables an additive decomposition of the squared Mahalanobis distance, the Shapley value can be used to identify outlying cells.
However, these contributions do not inform us about the supposed cell values under the assumption that they were not contaminated.
Apart from detecting outlying cells, estimating the values the cells were supposed to have is of major importance when handling cellwise outliers.
In this section, we outline how to combine the ideas of cellwise outlier detection and multivariate outlier explanation to obtain \textit{cellwise robust outlier explanations.}

\subsection{SCD (Shapley Cell Detector) algorithm}

As a starting point, we take another look at the decomposition derived in Theorem~\ref{theorem:explain_mahalanobis}, where we obtain the average marginal contributions of each component to the squared Mahalanobis distance. Equations~\eqref{eq:definexj} and \eqref{eq:shapley_md_long} allow us to interpret said contributions in more detail:
The value of $\phi_j(\bm{x})$ represents the average change in $\md^2(\bm{x})$ across all $2^{p-1}$ possible variations of other variables, when the $j$-th component of $\bm{x}$ is replaced by its mean. Hence, positive values of $\phi_j(\bm{x})$ indicate that replacing $x_j$ with $\mu_j$ would lead to an average reduction in $\md^2(\bm{x})$, whereas negative values indicate that such a replacement would have the opposite effect. 

The information provided by the Shapley value can now be used to design an algorithm for identifying outlying cells and replacing their values. We propose a stepwise procedure, which is described in detail in
Algorithm~\ref{algorithm:SCD}. We call this method \textit{Shapley Cell Detector}, abbreviated as SCD. The set $S$ is updated in each step and will finally contain the indices of all cells of an observation $\bm{x}$ which are marked as outlying. 
In the course of each iteration,
we replace the coordinates of $\bm{x}$ that have the highest scores
according to the Shapley value $\bm{\phi}(\bm{x})$, until the modified observation $\tilde{\bm{x}}$ is no longer a multivariate outlier. 
The replaced value does not directly correspond to the mean, but rather to a value towards the direction of the mean whereby the magnitude of the correction is controlled by a step size parameter $\delta \in (0,1]$.
This is done for each set $S$ until a score resulting from the complement $\bar{S}:= P \setminus S$ of $S$ is larger than one obtained from the set $S$.
Here, the $s$-dimensional subvector $\tilde{\bm{x}}_S = (\tilde{x}_j)_{j \in S}$ of the modified observation $\tilde{\bm{x}}$ consists of the replaced values, which are dependent on $\bm{\mu}_S = (\mu_j)_{j \in S}$. 
Note that the maximum in line 6 of Algorithm~\ref{algorithm:SCD} is usually unique, implying that $k=1$ and only one index is added to $S$ per iteration.

\begin{algorithm}
    \caption{Shapley Cell Detector (SCD)}
    \label{algorithm:SCD}
    \begin{algorithmic}[1] 
        \Procedure{SCD}{$\bm{x}, \bm{\mu}, \bm{\Sigma}, \delta$} 
            \State $\tilde{\bm{x}} \gets \bm{x}$
            \State $S \gets \emptyset$
            \State $\bm{\phi}=(\phi_1,\ldots ,\phi_p)' \gets \bm{\phi}(\bm{x}, \bm{\mu}, \bm{\Sigma})=(\phi_1(\bm{x}, \bm{\mu}, \bm{\Sigma}),\ldots ,\phi_p(\bm{x}, \bm{\mu}, \bm{\Sigma}))'$
            \While{$\md^2(\tilde{\bm{x}}) > \chi^2_{p,0.99}$} 
                \State $S \gets S \cup \{j_1,\ldots ,j_k\} \text{, where }(\phi_{j_l})_{l=1,\ldots ,k}=\max_{i=1,\ldots ,p}\phi_i$
                \While{$\max_{j \in S}\phi_j > \max_{j \in \bar{S}}\phi_j $} 
                    \State $\tilde{\bm{x}}_S \gets \tilde{\bm{x}}_S - (\tilde{\bm{x}}_S - \bm{\mu}_S)\delta$
                    \State $\bm{\phi} \gets \bm{\phi}(\tilde{\bm{x}}, \bm{\mu}, \bm{\Sigma})$
                \EndWhile
            \EndWhile \label{euclidendwhile}
            \State \textbf{return} $\tilde{\bm{x}}$
        \EndProcedure
    \end{algorithmic}
\end{algorithm}

\begin{example} \label{example:5d}
We illustrate the working principle of Algorithm~\ref{algorithm:SCD} by considering a
$5$-dimensional observation $\bm{x} = (0,1,2,2.2,2.5)'$ from a population with mean $\bm{\mu} = (0,0,0,0,0)'$ and covariance matrix $\bm{\Sigma}$, with elements $\sigma_{jk} = 0.9, j \neq k$, and $\sigma_{jj}=1$. Here, $\bm{x}$ would be marked as a multivariate outlier since $\md^2(\bm{x}) = 44.90 > 15.09 =\chi^2_{5,0.99}$ and we can employ the Shapely value of Theorem~\ref{theorem:explain_mahalanobis} to explain this multivariate outlier, resulting in $\bm{\phi}(\bm{x}) = (0, -5.07, 9.87, 15.26, 24.84)'$. 
Those outlyingness scores are then used in Algorithm~\ref{algorithm:SCD} to flag outlying cells and, for simplicity, we analyze the case where $\delta=1$. In this scenario, the coordinate $x_5$ is identified first, followed by $x_4$ and then $x_3$. Each variable in turn is replaced by $\mu_5, \mu_4$, and $\mu_3$, respectively.
This results in an altered version $\tilde{\bm{x}}$ of the original observation $\bm{x}$, which is no longer outlying, and therefore the algorithm stops.

It should be noted that in this example, we have no information about which cells are truly outlying or have been manipulated. However, in general, it seems desirable to keep the number of modified coordinates as small as possible.
\end{example}

Algorithm~\ref{algorithm:SCD} is easy to implement and fast to compute. The discrepancy between the original and replaced cells indicates the amount of outlyingness in the particular variables.
However, this simplicity results from our definition of the Shapley value in Theorem~\ref{theorem:explain_mahalanobis}, which leads to a replacement by a value towards the mean in Algorithm~\ref{algorithm:SCD}.

Figure~\ref{fig:my_label} provides a further illustration of the SCD procedure for a two-dimensional example. It schematically displays five specific observations, denoted by \texttt{A} to \texttt{E}, to which Algorithm~\ref{algorithm:SCD} is applied. The left plot shows the result when setting $\delta=1$ in the algorithm, while the right plot corresponds to $\delta=0.1$. The points in the plots highlight the individual computation steps of the algorithm, and the ellipse indicates the stopping criterion $\chi^2_{2,0.99}$. While for $\delta=1$ the algorithm uses at most two steps, this behavior changes for the case of $\delta=0.1$. Using a smaller step size leads to different replacement values for the points \texttt{B}, \texttt{D}, and \texttt{E}. Comparing the computation steps for points \texttt{B} and \texttt{D}, the results in the right plot seem more meaningful since they avoid increasing the Mahalanobis distance during the computation, and the final replacement is more similar to the original points.

\begin{figure}
    \centering
    \includegraphics[width = 0.85\linewidth]{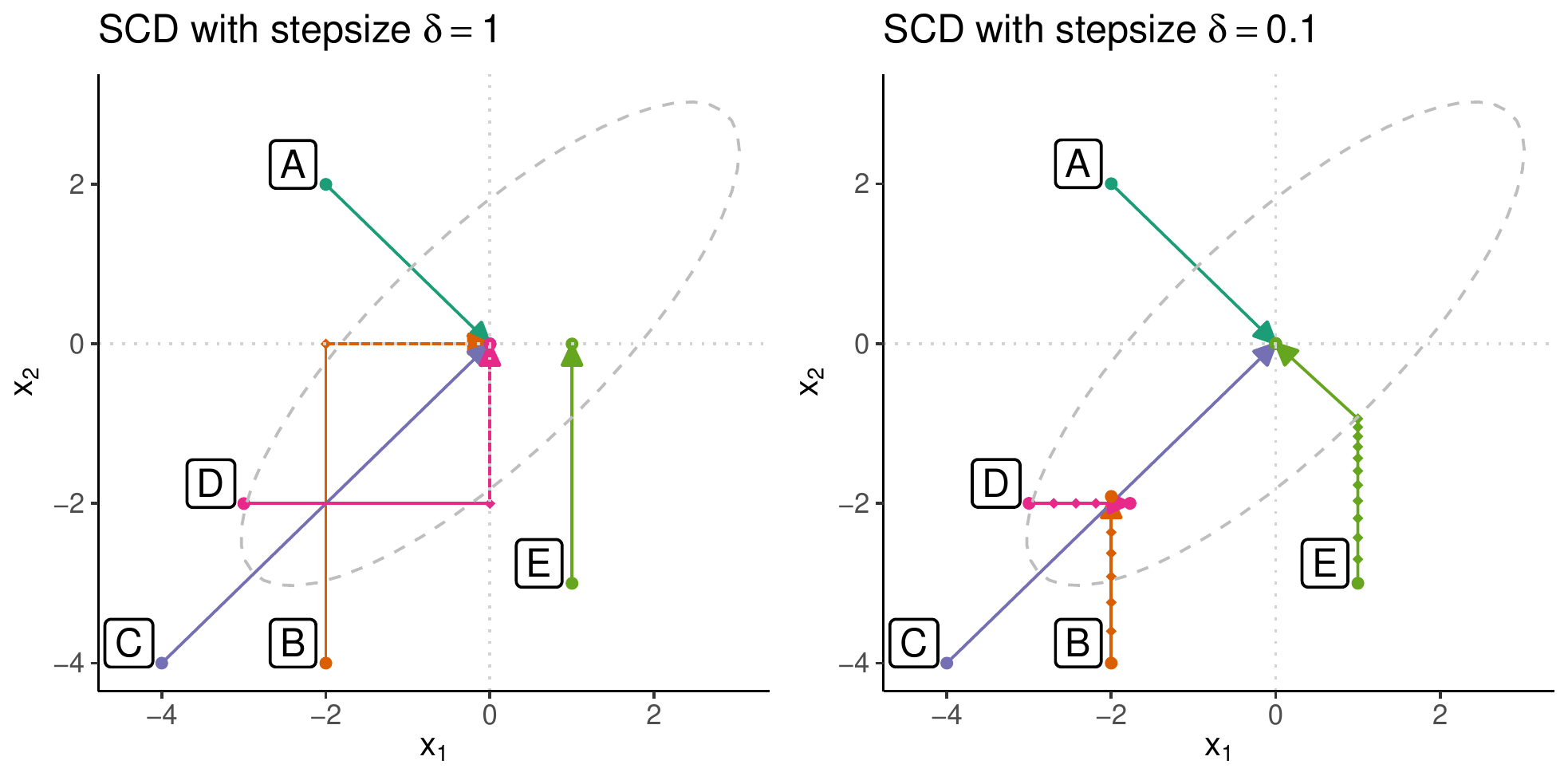}
    \caption{In this figure, two graphs are displayed to illustrate the operating principle of Algorithm~\ref{algorithm:SCD} in a two-dimensional setting. Both plots show the position of the five outlying points \texttt{A} to \texttt{E} and their replacements. The plot on the left side shows the results when cells are directly replaced by their corresponding mean, while the right side illustrates the stepwise approach.}
    \label{fig:my_label}
\end{figure}

Until now, we only considered a replacement of outlying cells by the mean or by a value towards the direction of the mean. However, the algorithm is only stopped by a sufficient reduction of the squared Mahalanobis distance. 
Therefore, the task at hand can thus be redefined further: Find the optimal replacements for outlying cells to achieve the highest possible reduction in squared Mahalanobis distance. As before, the Shapley value should determine the outlyingness of the cells. 

\subsection{MOE (Multivariate Outlier Explainer) algorithm}
\label{subsection:replacement}

Based on the definition of the Shapley value in Equation~\eqref{eq:shapley_md}, a coordinate has a low outlyingness contribution if it is close to its mean. Consequently, it is unlikely that this cell is flagged as outlying. This center-outward ordering is induced by the squared Mahalanobis
distance computed with respect to the mean, and thus it explains the \textit{global outlyingness} of an observation. However, the described procedure might not be optimal for detecting cellwise outliers,
where \textit{local outlyingness} is emphasized, because the information contained in the regular cells of an observation could be incorporated to define an optimal replacement. 
For this purpose, an alternative approach to using the mean as the center for computing Mahalanobis distances and Shapley values is outlined in the following paragraphs. We call the newly proposed center parameter \textit{reference point}. This new Shapley value will also be used later for an outlier replacement strategy.

The question of how to best replace cells of an observation to minimize the squared Mahalanobis distance has been addressed in \cite{Raymaekers2021}. Here we assume that the set $S$ of outlying cells is fixed (and $S\neq \emptyset$) for an observation $\bm{x}=(x_1,\ldots , x_p)'$, and the cells $x_j$ should be shifted to the values $\tilde{x}_j$, for $j \in S$. Explicitly we can write this as $\bm{x} - \bm{E}_S \bm{\beta}$ where $\bm{E}_S$ denotes the $p \times s$ matrix with the standard basis vectors $\bm{e}_j, j \in S$ as columns.
The squared Mahalanobis distance of this expression can now be rewritten as follows,
\begin{align*}
        \md^2_{\bm{\mu},\bm{\Sigma}}(\bm{x} - \bm{E}_S \bm{\beta}) 
    &= (\bm{x} - \bm{\mu}- \bm{E}_S \bm{\beta})'\bm{\Sigma}^{-1}(\bm{x} - \bm{\mu}- \bm{E}_S \bm{\beta}) \\
    &= \norm{\bm{\Sigma}^{-1/2}(\bm{x} - \bm{\mu}- \bm{E}_S \bm{\beta})}_2^2\\
    &= \norm{\bm{\Sigma}^{-1/2}(\bm{x} - \bm{\mu})- \bm{\Sigma}^{-1/2}\bm{E}_S\bm{\beta} }_2^2,
\end{align*}
where $\bar{S}:=P \setminus S$.
Minimizing this expression corresponds to a least-squares problem, which leads to the least-squares
estimator
\begin{align}
    \hat{\bm{\beta}}(S)=\argmin_{\bm{\beta} \in \R^{s}} \md^2_{\bm{\mu},\bm{\Sigma}}(\bm{x} - \bm{E}_S \bm{\beta}) = (\bm{E}_S' \bm{\Sigma}^{-1} \bm{E}_S)^{-1} \bm{E}_S' \bm{\Sigma}^{-1}(\bm{x} - \bm{\mu}), \label{eq:minMD}
\end{align}
and the replaced values are given by $\tilde{\bm{x}}_S=\bm{x}_S-\hat{\bm{\beta}}(S)$ \citep{Raymaekers2021}.
If $S$ consists of only one element, say $S=\{j\}$, for $j\in P$, then the solution of Equation~(\ref{eq:minMD}) simplifies to 
\begin{align}
    \hat{\beta}(j)=\frac{1}{\omega_{jj}} (\omega_{j1},\dots ,\omega_{jp})(\bm{x} - \bm{\mu}), \label{eq:minMD1}
\end{align}
where $\omega_{ij}$ denotes the element $(i,j)$ of $\bm{\Sigma}^{-1}$, and
the modification for observation $\bm{x}$ is given by $\tilde{x}_j=x_j-
\hat{\beta}(j)$.

Building on those findings, we can now define the new reference point $\tilde{\bm{\mu}}(\bm{x},S)$ for a fixed set of outlying cells $S$, by setting each coordinate to
\begin{align}
    \tilde{\mu}_j(\bm{x},S) = x_j-\hat{\beta}_{(j)}(S \cup \{j\}), \label{eq:reference_point}
\end{align}
where $\hat{\beta}_{(j)}(S \cup \{j\})$ is the component of $\hat{\bm{\beta}}(S \cup \{j\})$ corresponding to the index $j$. To determine the set $S$, we adapt the SCD procedure, by incorporating $\tilde{\bm{\mu}}(\bm{x},S)$ as a reference point for the Mahalanobis distance and updating it in each iteration. We refer to this procedure as Multivariate Outlier Explainer (MOE) and outline its general workflow in Algorithm~\ref{algorithm:MOE}.

\begin{algorithm}
    \caption{Multivariate Outlier Explainer (MOE)}
    \label{algorithm:MOE}
    \begin{algorithmic}[1] 
        \Procedure{MOE}{$\bm{x}, \bm{\mu}, \bm{\Sigma}, \delta$} 
            \State $\tilde{\bm{x}} \gets \bm{x}$
            \State $S \gets \emptyset$
            \State $\bm{d} = (d_1,\ldots,d_p)' \gets (0,\ldots,0)'$
            \State $\tilde{\bm{\mu}} = (\tilde{\mu}_1,\ldots ,\tilde{\mu}_p)'\gets \tilde{\bm{\mu}}(\bm{x}, S) = 
            (x_1-\hat{\beta}(1),\ldots ,x_p-\hat{\beta}(p))'$
            \State $\bm{\phi}=(\phi_1,\ldots ,\phi_p)' \gets \bm{\phi}(\tilde{\bm{x}}, \tilde{\bm{\mu}}, \bm{\Sigma})=(\phi_1(\tilde{\bm{x}}, \tilde{\bm{\mu}}, \bm{\Sigma}),\ldots ,\phi_p(\tilde{\bm{x}}, \tilde{\bm{\mu}}, \bm{\Sigma}))'$
            \While {$\md_{\tilde{\bm{\mu}}, \bm{\Sigma}}^2(\tilde{\bm{x}}) > \chi^2_{p,0.99}(\md^2(\tilde{\bm{\mu}}))$} \label{line:conv_crit} 
                \State $S \gets S \cup \{j_1,\ldots ,j_k\} \text{, where } (\phi_{j_l})_{l=1,\ldots ,k}=\max_{i=1,\ldots ,p}\phi_i$
                \While{$\max_{j \in S}\phi_j > \max_{j \in \bar{S}}\phi_j $} 
                    \State $\bm{c} \gets (\tilde{\bm{x}}_S - \tilde{\bm{\mu}}_S)\delta$
                    \State $\bm{d}_S \gets \bm{d}_S + \bm{c}$ \label{line:distance_update}
                    \State $\tilde{\bm{x}}_S \gets \tilde{\bm{x}}_S - \bm{c}$
                    \State $\bm{\phi} \gets \bm{\phi}(\tilde{\bm{x}}, \tilde{\bm{\mu}}, \bm{\Sigma})$
                \EndWhile
                \State $\tilde{\bm{\mu}} \gets \tilde{\bm{\mu}}(\bm{x}, S)$\label{line:update_reference}
            \EndWhile
            \State $\bm{d} = (d_1,\ldots,d_p)' \gets (d_1/\sqrt{\sigma_{11}},\ldots,d_p/\sqrt{\sigma_{pp}})'$
            \State $S \gets \{j_1,\ldots ,j_m\} \text{, for which } (d_{j_l})_{l = 1,\ldots,m} > \eta \max_{i=1,\ldots ,p}d_i$                
            \State $\tilde{\bm{\mu}} \gets \tilde{\bm{\mu}}(\bm{x}, S)$
            \State $\bm{\phi} \gets \bm{\phi}(\bm{x}, \tilde{\bm{\mu}}, \bm{\Sigma})$\label{line:update_phi}
            \State $\tilde{\bm{x}} \gets \bm{x}$
            \State $\tilde{\bm{x}}_S \gets \tilde{\bm{\mu}}_S$
            \State \textbf{return} $\tilde{\bm{x}}, \tilde{\bm{\mu}}, \bm{\phi}$
        \EndProcedure
    \end{algorithmic}
\end{algorithm}

The MOE procedure is initialized by computing the reference point $\tilde{\bm{\mu}} = \tilde{\bm{\mu}}(\bm{x}, S)$, with $S = \emptyset$. For the initial computation of $\hat{\bm{\beta}}$ we can simply apply Equation~\eqref{eq:minMD1} to each coordinate of $\bm{x}$, which can be done in one step by matrix multiplication.
Using this initial reference point, we obtain the squared Mahalanobis distance $\md_{\tilde{\bm{\mu}},\bm{\Sigma}}^2(\tilde{\bm{x}})$, which is in turn used to define the corresponding Shapley value $\bm{\phi}(\tilde{\bm{x}}, \tilde{\bm{\mu}},\bm{\Sigma})$ according to Equation~(\ref{eq:ShapleyMatrix}).\footnote{The properties of the Shapley value listed in Section~\ref{section:method} remain unchanged, particularly the \textit{Efficiency} property:  The sum of the coordinates of the Shapley value equals the squared Mahalanobis distance with respect to the new reference point.} 
Outlying cells are then identified based on the Shapley value and corrected in the direction of their corresponding entries of $\tilde{\bm{\mu}}$, resulting in the modified observation $\tilde{\bm{x}}$. The process of updating the reference point $\tilde{\bm{\mu}}$, identifying outlying cells based on their Shapley values, and correcting them in the direction of $\tilde{\bm{\mu}}$, is then repeated until the vector $\tilde{\bm{x}}$ is no longer marked as outlying. Aside from using the reference point $\tilde{\bm{\mu}}$ in the MOE procedure instead of $\bm{\mu}$, the concept of the algorithm is similar to the SCD procedure, but there are two other important distinctions:
\begin{itemize}
\item The outlier cutoff value used in line \ref{line:conv_crit} is adapted to the new reference point. \cite{Filzmoser2014} have shown that for a sample $\bm{x}$ drawn from a multivariate normal distribution $\mathcal{N}(\bm{\mu},\bm{\Sigma})$, the conditional distribution of the squared Mahalanobis distance $\md_{\tilde{\bm{\mu}}, \bm{\Sigma}}^2(\bm{x})$ given $\tilde{\bm{\mu}}$ is a non-central chi-square distribution with $p$ degrees of freedom and non-centrality parameter $\lambda = \md^2(\tilde{\bm{\mu}})$, denoted as $\chi^2_p(\lambda)$. Therefore, the 0.99 quantile of this distribution is taken as the cutoff value to exit the loop.
\item Since the goal of this procedure is cellwise outlier detection, we want to avoid flagging coordinates which were only shifted by a negligible amount. Therefore, we monitor the distance $\bm{d}$ by which each cell of $\bm{x}$ is shifted in the direction of $\tilde{\bm{\mu}}$. Initially, this distance is set to $d_j = 0, j = 1,\ldots,p$, followed by an iterative update of the distance variable in line~\ref{line:distance_update}. Moreover, we adjust $\bm{d}$ such that the distances are independent of the scale of the single coordinates. We then update the set of outlying coordinates $S$ by only choosing coordinates for which $d_j > \eta\max_{l=1,\ldots,p} d_l$, with $\eta \in [0,1]$. As a default value, we have selected $\eta = 0.2$ since it represents be a good trade-off between the recall and the precision of the procedure. Finally, we amend $\tilde{\bm{\mu}}(\bm{x}, S)$,  $\bm{\phi}(\bm{x}, \tilde{\bm{\mu}}, \bm{\Sigma})$, and $\tilde{\bm{x}}$ according to the updated set $S$. 
\end{itemize}

Algorithm~\ref{algorithm:MOE} allows us to detect and impute cellwise outliers and it also yields a local explanation of the outlyingness. Furthermore, the Shapley values computed with respect to the reference point $\tilde{\bm{\mu}}(\bm{x},S)$ can be used to explain the results of other cellwise outlier detection procedures. To this end, we merely need to compute $\tilde{\bm{\mu}} = \tilde{\bm{\mu}}(\bm{x},S)$ for a given set of outlying cells $S$ of an observation $\bm{x}$. By subsequently determining the Shapley value $\bm{\phi}(\bm{x}, \tilde{\bm{\mu}}, \bm{\Sigma})$, we can therefore explain \textit{why} the observation is outlying.

\begin{example}
    We reiterate Example~\ref{example:5d} with the MOE procedure, using a step size of $\delta$ of $0.1$. The first two coordinates $x_1$ and $x_2$ are marked as outlying, resulting in $\tilde{\bm{\mu}} = \tilde{\bm{\mu}}(\bm{x},\{1,2\}) = (2.19, 2.19, 2.27, 2.13, 2.04)$ and $\bm{\phi}(\bm{x},\tilde{\bm{\mu}},\bm{\Sigma}) = (34.89, 7.07, -0.86, 1.28, 4.88)$.
    
    Comparing the results of Algorithms~\ref{algorithm:SCD} and \ref{algorithm:MOE}, it can be seen that the sets of outlying cells for the two algorithms are disjoint, and the interpretation of the result is therefore different. The reason for this discrepancy is mainly that we no longer decompose $\md^2_{\bm{\mu}, \bm{\Sigma}}(\bm{x})$, but instead the squared Mahalanobis distance of the amended reference point $\tilde{\bm{\mu}}$, $\md^2_{\tilde{\bm{\mu}}, \bm{\Sigma}}(\bm{x})$. 
    While the Shapley value $\bm{\phi}(\bm{x}, \bm{\mu}, \bm{\Sigma})$ used in Algorithm~\ref{algorithm:SCD} explains the global outlyingness, the Shapley value $\bm{\phi}(\bm{x}, \tilde{\bm{\mu}}, \bm{\Sigma})$ used in Algorithm~\ref{algorithm:MOE} provides us with a local understanding of the outlyingness, which is better suited to the setting of cellwise outlyingness. 
    
    In Figure \ref{fig:introduction_example_new} we compare the final Shapley values yielded by the SCD and MOE algorithms, respectively. In Figure~\ref{fig:introduction_example_phi_history} we show the Shapley values computed during each iteration for both algorithms, using a step size $\delta = 0.1$. Both figures indicate the squared Mahalanobis distance (black bar) and the corresponding (non-)central chi-square quantile (dotted line). 
    \begin{figure}
        \centering
        \includegraphics[width=0.85\linewidth]{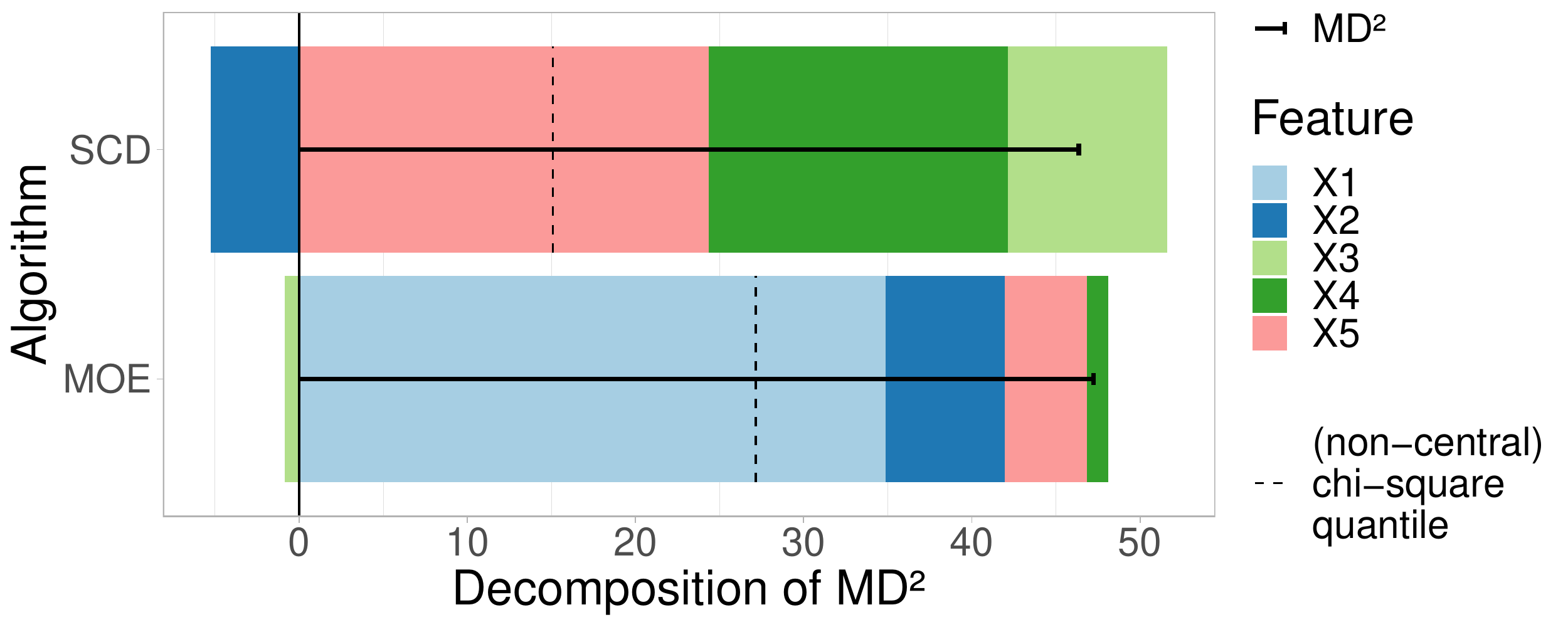}
        \caption{Comparison of the Shapley values $\bm{\phi}(\bm{x}, \bm{\mu}, \bm{\Sigma})$ used in Algorithm~\ref{algorithm:SCD} to explain the \textit{global} outlyingness, and $\bm{\phi}(\bm{x}, \tilde{\bm{\mu}}, \bm{\Sigma})$, used in Algorithm~\ref{algorithm:MOE} to gain \textit{local} insights on the outlyingness, with the input values defined in Example~\ref{example:5d}. The SCD procedure identifies the three coordinates $x_3$, $x_4$, and $x_5$, which are furthest from the mean $\bm{\mu}$. On the other hand, the MOE algorithm uses the alternative reference point $\tilde{\bm{\mu}}$ to identify variables $x_1$ and $x_2$.}
        \label{fig:introduction_example_new}
    \end{figure}
    \begin{figure}
        \centering
        \includegraphics[width=1\linewidth]{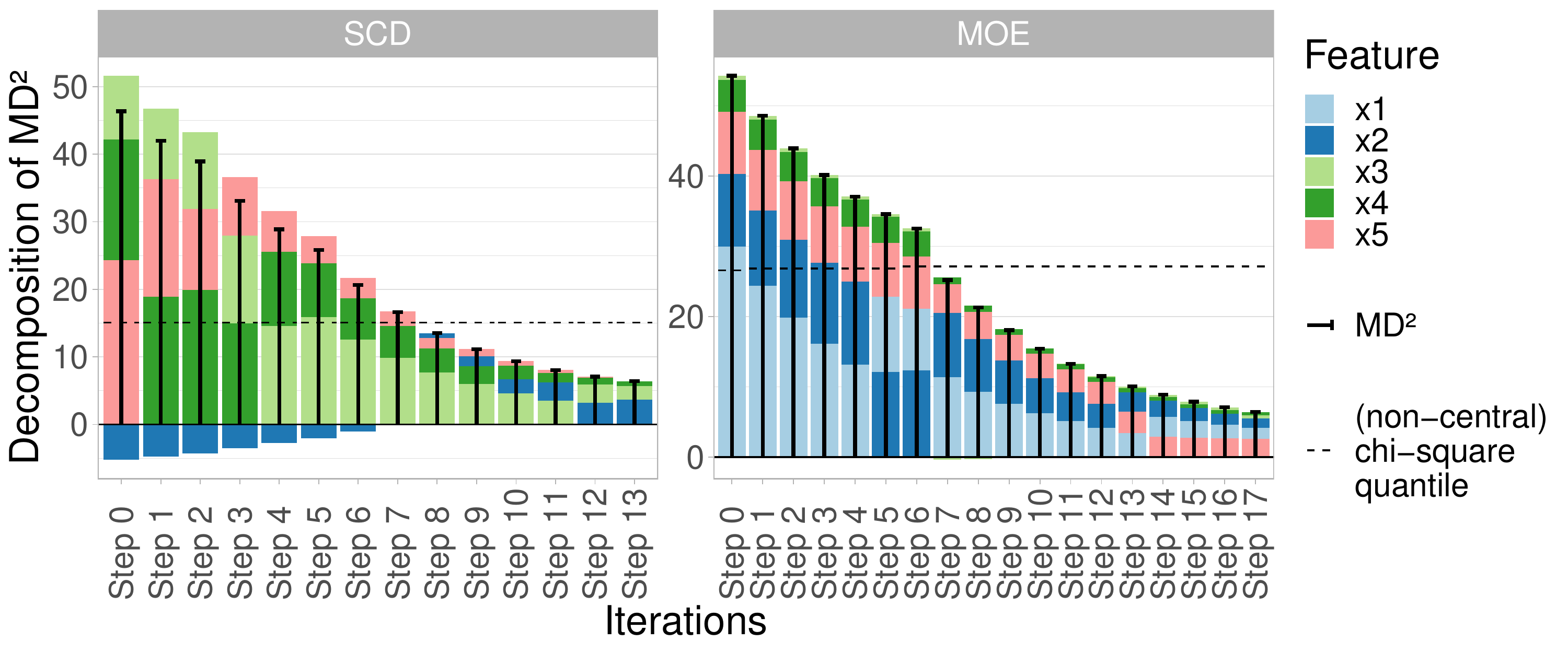}
        \caption{Comparison between the Shapley values calculated for each iteration of Algorithm~\ref{algorithm:SCD} (left) and Algorithm~\ref{algorithm:MOE} (right), respectively, for Example~\ref{example:5d}. While the outlyingness is monotonically decreasing in both cases, the sets of identified variables are disjoint. Both the SCD and MOE procedures reduce the outlyingness by iteratively shifting the identified variables toward the corresponding coordinates of $\bm{\mu}$ or $\tilde{\bm{\mu}}$, respectively.    
        }
        \label{fig:introduction_example_phi_history}
    \end{figure}    
\end{example}

\section{Simulations}
\label{section:simulations}

The simple numerical example from the previous section has illustrated that the SCD and MOE algorithms can lead to quite different outcomes. However, it needs to be emphasized that their purposes also differ: While the SCD procedure aims at global outlier explanation, i.e.~with respect to the distribution of the entire dataset, the MOE procedure is locally applicable and builds on the local information contained in the regular cells of an individual observation. Nevertheless, it can be interesting to compare both procedures in terms of their ability to identify cellwise outliers and, in particular, to examine their performance in comparison to a reference method, namely the cellHandler procedure introduced by \cite{Raymaekers2021}.\footnote{We choose standard parameters for all three procedures, meaning that both the SCD and MOE algorithms are set up with a step size of $\delta = 0.1$ and the MOE procedure additionally uses a detection threshold $\eta = 0.2$.}

In our analysis, we compare two different mechanisms for generating outliers and analyze the effects of various parameter configurations, which are summarized in Table~\ref{tab:simulation_parameters} and described in more detail in the following paragraphs. For each specific parameter combination, we repeat the simulations 50 times and compute averages of the resulting measures Recall, Precision, and F-Score.
\begin{table}[!h]
\caption{Summary of the parameters used for the two simulation scenarios on cellwise outlier detection discussed in Section~\ref{section:simulations}.}
\centering
    \begin{tabular}[t]{lrr}
    \toprule
     Parameters & Shift outliers & Structured outliers\\
     \midrule
     Dimension, $p$ & $5, 10, 20, 30, 40$ & $5, 10, 20, 30, 40$\\
     Covariance, $\bm{\Sigma}$ & $\bm{C}_{\text{mix}}$, $\bm{C}_{\text{low}}$, $\bm{C}_{\text{mod}}$& $\bm{C}_{\text{mix}}$, $\bm{C}_{\text{low}}$, $\bm{C}_{\text{mod}}$\\
     Fraction of outlying columns, $\epsilon_1$ & $0.1, 0.2, 0.3, 0.4$ & -\\
     Fraction of outlying rows, $\epsilon_2$ & $0.1, 0.2, 0.3, 0.4$ & -\\
     Fraction of outlying cells, $\epsilon_3$ & - & $0.1, 0.2, 0.3, 0.4$\\
     Magnitude of outlyingness, $\gamma$ & $1, 2, 3$ & $2, 3, 4, 5, 6$\\
     \midrule
     Total combinations & 720 & 300\\
     \bottomrule
    \end{tabular}
    \label{tab:simulation_parameters}
\end{table}

For both outlier generation procedures, we generate data matrices with $p$ columns and $n= 20p$ rows from multivariate normal distributions with mean $\bm{\mu}=\bm{0}$ and three different types of covariance matrices $\bm{\Sigma}$, namely $\bm{C}_{\text{mod}}$, $\bm{C}_{\text{mix}}$, and $\bm{C}_{\text{low}}$. In all three cases, the diagonal elements are set to 1. 
For $\bm{C}_{\text{mod}}$, the off-diagonal elements are chosen as $0.5$, resulting in moderate correlations. The off-diagonal elements of $\bm{C}_{\text{mix}}$ correspond to $(-0.9)^{\abs{j-k}}, j \neq k$, yielding both high and low correlations. 
For $\bm{C}_{\text{low}}$, the off-diagonal elements of are randomly generated as described in \cite{Agostinelli2015}, generally resulting in low correlations. 

To analyze the effect of highly correlated shift-outliers, we randomly select $\lceil n \epsilon_2 \rceil$ rows, and for each of those rows we replace $r = \lceil p \epsilon_1 \rceil$ randomly selected cells by $r$-variate outliers. Those follow a Gaussian distribution with mean $\bm{\mu} = (\gamma, \ldots, \gamma)'$ and covariance matrix $\tilde{\bm{\Sigma}}$, with elements $\tilde{\sigma}_{jk} = 0.7, j\neq k$, and $\tilde{\sigma}_{jj} = 1$.
The magnitude of the outliers is determined by the value $\gamma$,
which is selected according to Table~\ref{tab:simulation_parameters}. Following this approach, the fraction of outlying cells ranges between $0.01$ and $0.16$.

For the second scenario, outliers are generated such that they are structurally outlying, but have low univariate outlyingness, as proposed by \cite{Raymaekers2021}. For this purpose, $n \epsilon_3$ cells are selected randomly in each column. Like this, each row contains a subset $K \subseteq P$ of cells $\bm{x}_K$ which are subsequently replaced by the vector $\gamma \sqrt{k}\bm{u}'/\md_{\bm{\mu}_K,\bm{\Sigma}_K}(\bm{u})$, where $k=|K|$, and $\bm{u}$ is the eigenvector of $\bm{\Sigma}_K$ that corresponds to the smallest eigenvalue.\footnote{We want to mention that computational issues arose with the cellHandler procedure for the case of moderate to high correlations and $\gamma = 2$. To allow for a fair comparison, we chose to exclude the cases in which these issues occurred from all three procedures.}

We summarize the overall results in Table~\ref{tab:simulation_results}, comparing Precision, Recall, and F-Score. The performance metrics are averaged over all parameters not listed in the table ($p$, $\epsilon_1$, $\epsilon_2$, $\epsilon_3$, and $\gamma$) and all replications.
Regarding Precision, the MOE algorithm exhibits the best results in 5 out of 6 settings. Concerning Recall, the SCD procedure performs best when the correlations are low to moderate, while the cellHandler procedure performs best when the correlations are moderate or mixed. Finally, when comparing the F-Score, we see that each algorithm outperforms the remaining two at least once. However, the results listed in the table are averaged over a wide range of parameter settings, therefore we study the individual effects of the different parameters in more detail in the following. 

\begin{table}[!h]

\caption{Summary of the results of the simulations described in Section~\ref{section:simulations}. The performance metrics Precision, Recall, and F-Score listed in this table are averaged over all replications and parameter combinations.}
\centering
    \small
    \begin{tabular}[t]{llrrrrrr}
        \toprule
        \multicolumn{1}{c}{ } & \multicolumn{1}{c}{ } & \multicolumn{3}{c}{Shift outliers} & \multicolumn{3}{c}{Structured outliers} \\
        \cmidrule(l{3pt}r{3pt}){3-5} \cmidrule(l{3pt}r{3pt}){6-8}
        $\bm{\Sigma}$ & Algorithm & Precision & Recall & F-Score & Precision & Recall & F-Score\\
        \midrule
        $\bm{C}_{mix}$ & SCD & 0.690 & 0.737 & 0.708 & 0.546 & 0.551 & 0.540\\
        $\bm{C}_{mix}$ & MOE & 0.894 & 0.707 & \textbf{0.782} & 0.916 & 0.545 & \textbf{0.668}\\
        $\bm{C}_{mix}$ & cellHandler & 0.760 & 0.743 & 0.741 & 0.854 & 0.564 & 0.667\\
        \addlinespace
        $\bm{C}_{low}$ & SCD & 0.713 & 0.510 & \textbf{0.574} & 0.767 & 0.715 & \textbf{0.729}\\
        $\bm{C}_{low}$ & MOE & 0.678 & 0.396 & 0.478 & 0.880 & 0.597 & 0.695\\
        $\bm{C}_{low}$ & cellHandler & 0.599 & 0.473 & 0.508 & 0.900 & 0.630 & 0.722\\
        \addlinespace
        $\bm{C}_{mod}$ & SCD & 0.767 & 0.405 & 0.507 & 0.859 & 0.530 & 0.627\\
        $\bm{C}_{mod}$ & MOE & 0.808 & 0.421 & \textbf{0.528} & 0.954 & 0.476 & 0.599\\
        $\bm{C}_{mod}$ & cellHandler & 0.649 & 0.471 & 0.522 & 0.917 & 0.513 & \textbf{0.634}\\
        \bottomrule
    \end{tabular}
    \label{tab:simulation_results}
\end{table}

In Figure~\ref{fig:simulations_marginal_dim} we analyze the effect of the dimension $p$ on the cellwise outlier detection performance. We focus on the case of highly correlated shift outliers, with fixed $\epsilon_1 = \epsilon_2 = 0.4$ and $\gamma = 1$. This results in a difficult situation, with many moderately contaminated
cells. For all three algorithms and covariance structures, we observe an increase in Precision as $p$ increases. The SCD procedure shows the strongest increase and the highest overall Precision in case of low correlations. For the mixed and moderate correlations, the MOE procedure exhibits the highest Precision. Moving on to Recall, we see an initial increase for all three methods for mixed and moderate correlations, and a decrease for low correlations. While the Recall is similar for all methods in case of moderate and mixed correlations, we observe that the SCD procedure has the highest Recall in case of low correlations. 

For the structured outliers, we illustrate the influence of $\gamma$ for fixed $\epsilon_3 = 0.4$ and $p = 30$ in Figure~\ref{fig:simulations_cellwise_gamma}. As expected, Precision and Recall are increasing as the magnitude of outlyingness, controlled by $\gamma$, increases. The MOE procedure shows the highest overall Precision. However, regarding Recall, we see that the SCD procedure performs better for mixed and high correlations. For low correlations, the cellHandler procedure exhibits the steepest increase in Recall as $\gamma$ increases. 

In conclusion, these simulations show that our approaches based on the Shapley value, particularly the MOE procedure, yield comparable results to one of the current state-of-the-art methods, namely the cellHandler procedure. 
While cellwise outlier detection presents the focus of the latter method, our approach is instead based on utilizing cellwise outlier detection specifically to enhance and robustify the outlyingness scores based on Theorem~\ref{theorem:explain_mahalanobis}, with respect to an observation's ``expected'' position, as outlined in Equations~\eqref{eq:minMD} and \eqref{eq:reference_point}. 
\begin{figure}
    \centering
    \includegraphics[width=1\linewidth]{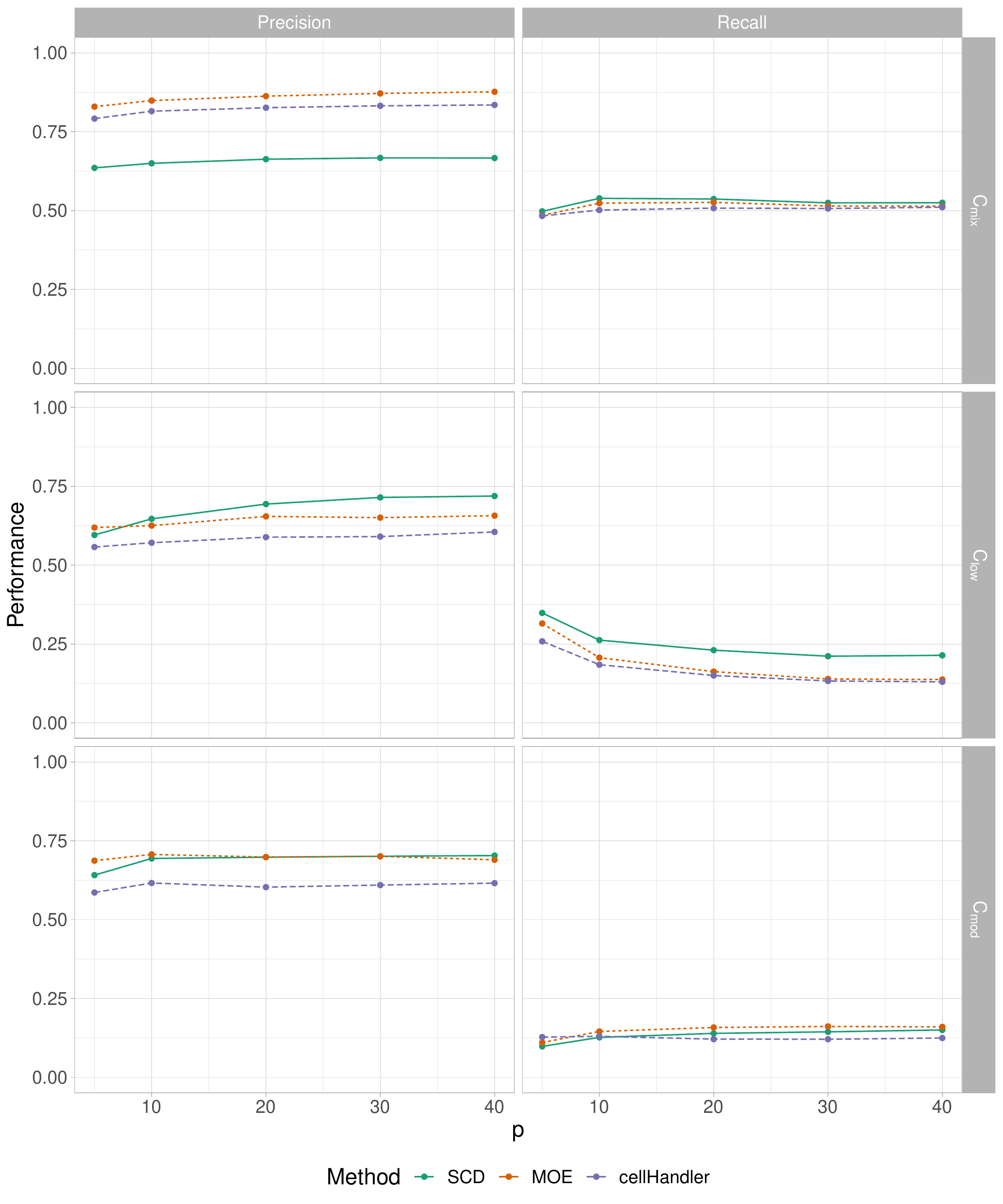}
    \caption{Comparison between the SCD, MOE, cellHandler procedures in the simulation setting of cellwise shift outliers outlined in Section~\ref{section:simulations}, with simulation parameters $\epsilon_1 = \epsilon_2 = 0.4$ and $\gamma = 1$. The performance scores Precision (left) and Recall (right) of the individual algorithms are listed separately for each type of covariance structure.}
    \label{fig:simulations_marginal_dim}
\end{figure}
\begin{figure}
    \centering
    \includegraphics[width=1\linewidth]{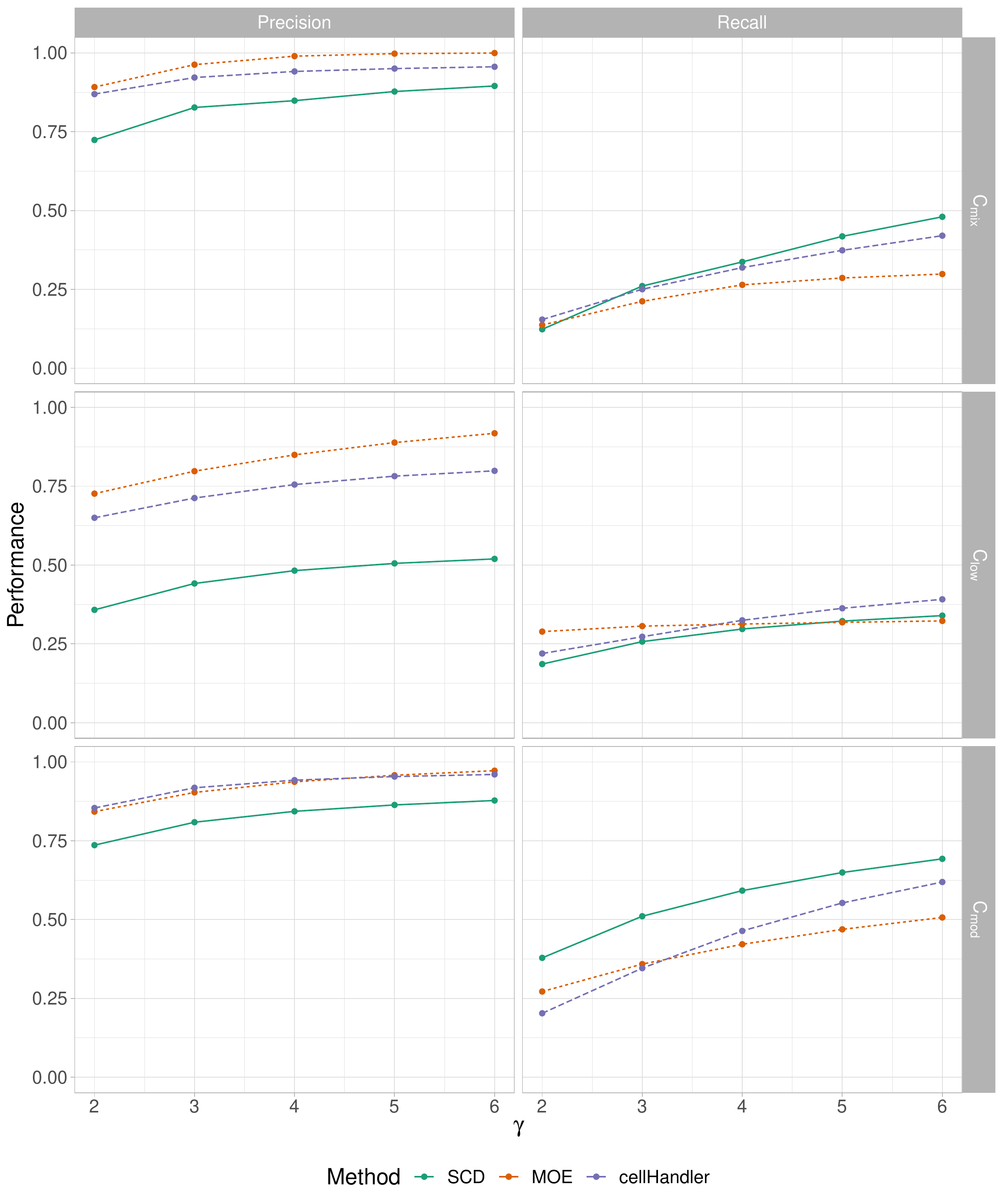}
    \caption{Comparison between the SCD, MOE, cellHandler procedures in the simulation setting of structured cellwise outliers outlined in Section~\ref{section:simulations}, with simulation parameters $\epsilon_3 = 0.4$ and $p = 30$. The performance scores Precision (left) and Recall (right) of the individual algorithms are listed separately for each type of covariance structure.}
    \label{fig:simulations_cellwise_gamma}
\end{figure}

\section{Applications}
\label{section:applications}

While the simulations shown in the previous section have demonstrated the performance of the methods and algorithms introduced in Section~\ref{section:method} and \ref{section:algorithm} on simulated datasets, we now apply them to two real-world data. To this end, we analyze the \textit{Top Gear} dataset from \cite{Alfons2021} and the \textit{Weather in Vienna} dataset from \cite{weather}. 

\subsection{Top Gear}

The \textit{Top Gear} dataset comprises measurements of 11 numerical attributes (see Figure~\ref{fig:TopGear_new_phi}) of 245 complete data instances of cars featured on the website of the BBC television series. Concerning data preprocessing, we apply a logarithmic transformation to five variables to obtain more symmetrical marginal distributions. Additionally, each column is robustly centered and scaled based on the median and the MAD. Furthermore, we estimate the covariance using the MCD estimator before applying the SCD, MOE, and cellHandler procedures.

In the following, we use three different types of plots to analyze the results of all three tested algorithms on this dataset: Figure~\ref{fig:TopGear_new_phi} summarizes the Shapley values, Figure \ref{fig:TopGear_new_cells} shows the outlying cells, and Figure~\ref{fig:TopGear_interaction} displays the Shapley interaction indices, respectively.

\begin{figure}[ht!]
    \centering
    \includegraphics[width=1\linewidth]{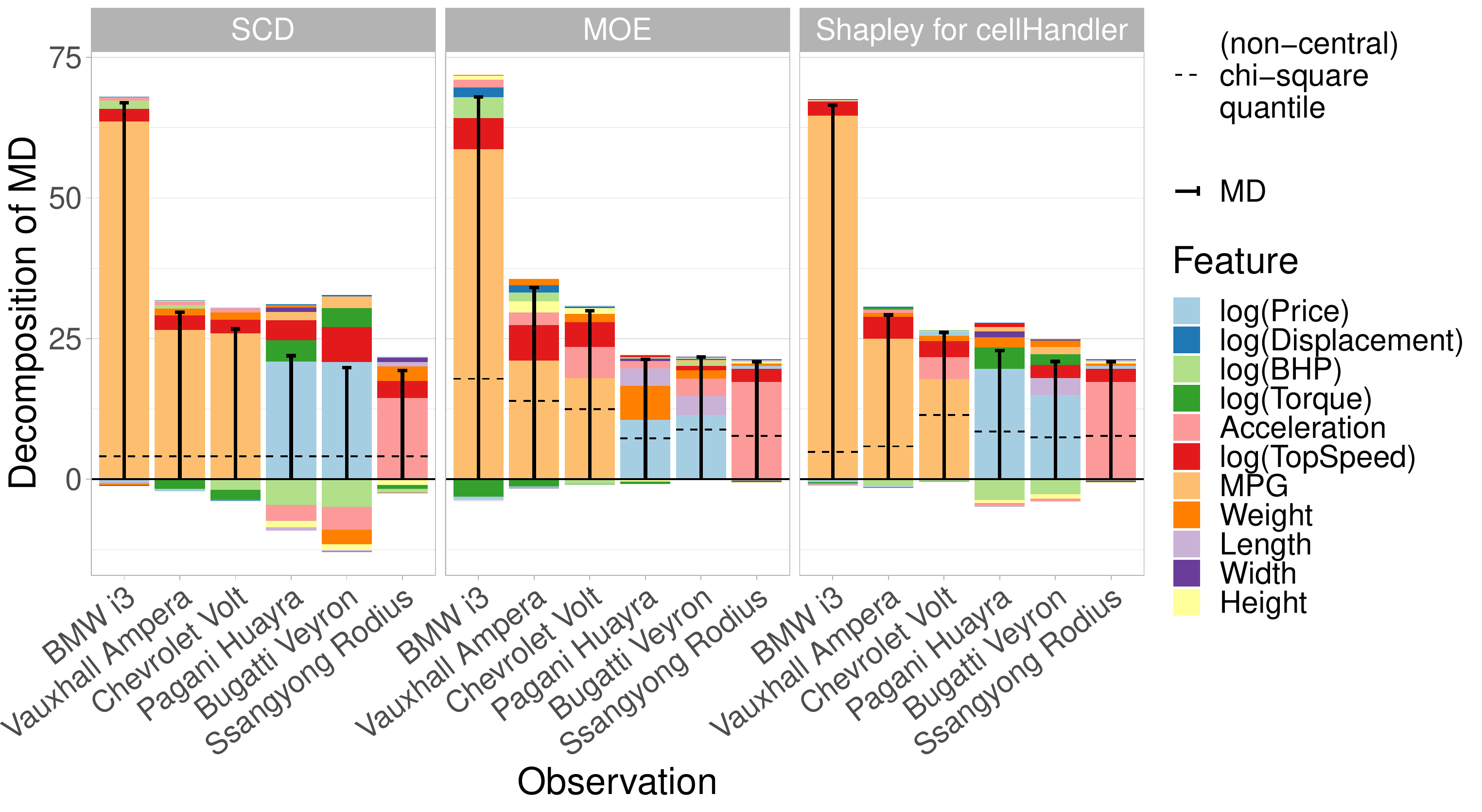}
    \caption{Comparison of the outlyingness scores resulting from the SCD (left), MOE (center), and cellHandler (right) procedures. Each graph shows a visualization of the Shapley values for the six most outlying observations. }
    \label{fig:TopGear_new_phi}
\end{figure}

In detail, Figure \ref{fig:TopGear_new_phi} consists of three graphs, each displaying the outlyingness decompositions according to the applied algorithm, of the six cars with the highest Mahalanobis distance.\footnote{If we are analyzing multiple observations with large differences in squared Mahalanobis distance, plotting the squared distance is ineligible and we display the square root instead. However, we are decomposing the squared distance in Theorem~\ref{theorem:explain_mahalanobis}, therefore we need to scale the outlyingness scores. For this reason, we derive each variable's proportional contribution to the squared distance and multiply it by the (not-squared) Mahalanobis distance. While this results in a somewhat distorted graph, this workflow enables us to analyze and compare multiple observations using a stacked bar chart.} 
In the left panel, we see the results generated using the SCD procedure, where we use the center of the data as a reference point. In the center panel, we show the results of using the MOE algorithm with the non-central chi-square cutoff. For both procedures, we use a step size $\delta = 0.1$ and the detection threshold for the MOE algorithm is defined as $\eta = 0.2$. In the right panel, we show the results of using the cellHandler procedure to flag outlying cells, and then employing the Shapley value, with reference point $\tilde{\bm{\mu}}(\bm{x},S)$ according to Equation~(\ref{eq:minMD}), to enhance the interpretability of the results, as outlined in Section~\ref{section:algorithm}.

Analyzing Figure~\ref{fig:TopGear_new_phi}, we first want to focus on the three cars with the highest outlyingness. For those cars, the main contribution to the squared Mahalanobis distance in the three graphs is caused by the variable \texttt{MPG}. Considering that these specific models are hybrid vehicles, it seems reasonable that their fuel consumption differs strongly from that of gasoline and diesel cars. All three methods lead to similar results in this case.
For the two sports cars \texttt{Bugatti Veyron} and \texttt{Pagani Huayra}, we see that the \texttt{Price} variable is contributing the most to the outlyingness, which is again visible in the results of all three methods. For these two cars, most characteristics are similar to a certain extent, except for their weight: The Bugatti weighs $1990$ kg and the Pagani has only a weight of $1350$ kg. This fact becomes clearly visible when applying the MOE algorithm, where the \texttt{Weight} variable has a high contribution to the squared Mahalanobis distance of the Pagani, but not for the Bugatti. 
The three procedures are again in agreement for the \texttt{Ssangyong Rodius}, where \texttt{Acceleration} contributes the most. In fact, the listed value for \texttt{Acceleration} is $0$, which is clearly an error in the published dataset itself.

\begin{figure}[ht!]
    \centering
    \includegraphics[width=1\linewidth]{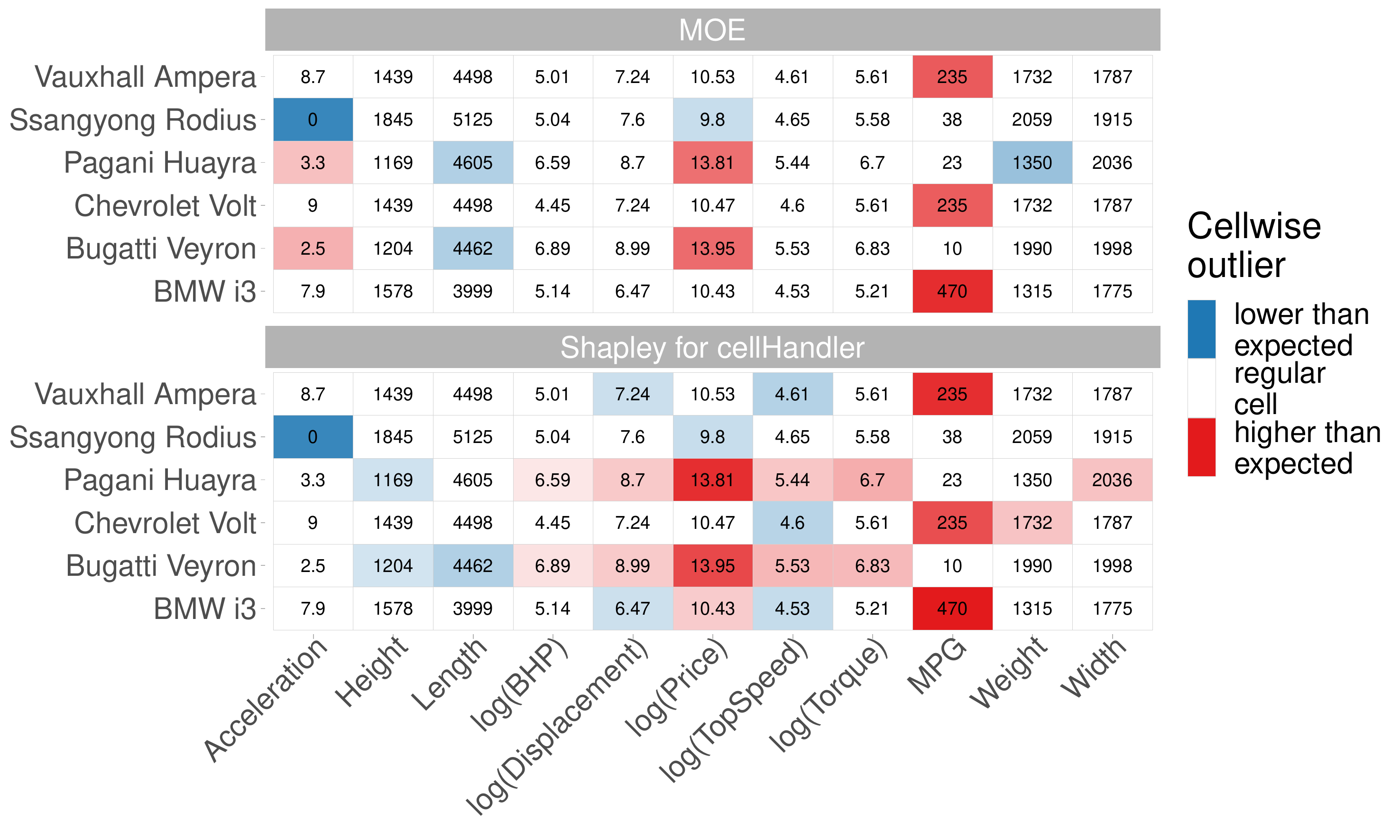}
    \caption{Outlying cells according to Algorithm~\ref{algorithm:MOE} (top) and the cellHandler procedure (bottom). Each cell shows the original value from the dataset, color coding indicates whether those values were higher (red) or lower (blue) than the imputed values, and the color intensity is based on the magnitude of the Shapley value.}
    \label{fig:TopGear_new_cells}
\end{figure}

In Figure~\ref{fig:TopGear_new_cells}, we show the results of applying the MOE procedure (top) to the TopGear dataset, as well as the Shapley values based on the cellHandler procedure (bottom). In these plots, the original values of the variables are displayed in each cell. Regular cells are represented by white rectangles, while outlying cells are colored red or blue, depending on whether the cell's original value is higher (red) or lower (blue) than the replacement. 
The color intensity is given according to the Shapley values of the cells. 
The biggest differences between the MOE and the cellHandler algorithm can be seen between the two sports cars \texttt{Bugatti Veyron} and \texttt{Pagani Huayra}, where the cellHandler procedure results in many more outlying cells. However, it is surprising that the \texttt{Acceleration} parameter is not flagged, since both cars have an exceptionally fast acceleration.

Finally, Figure~\ref{fig:TopGear_interaction} consists of heatmaps displaying the Shapley interaction indices, and barplots showing the corresponding Shapley values for the \texttt{Chevrolet Volt} (left) and \texttt{Pagani Huayra} (right). The Shapley values and interaction indices are based on the reference point we obtain from Algorithm~\ref{algorithm:MOE}. For the Chevrolet, we see a single outstanding index for \texttt{MPG}. On the other hand, the Pagani not only shows a high index for \texttt{Price}, but also for the pairwise outlyingness score between \texttt{Weight} and \texttt{Price}, which indicates that for an expensive sports car it is unexpectedly lightweight. 

\begin{figure}[ht!]
    \centering
    \includegraphics[width=0.85\linewidth]{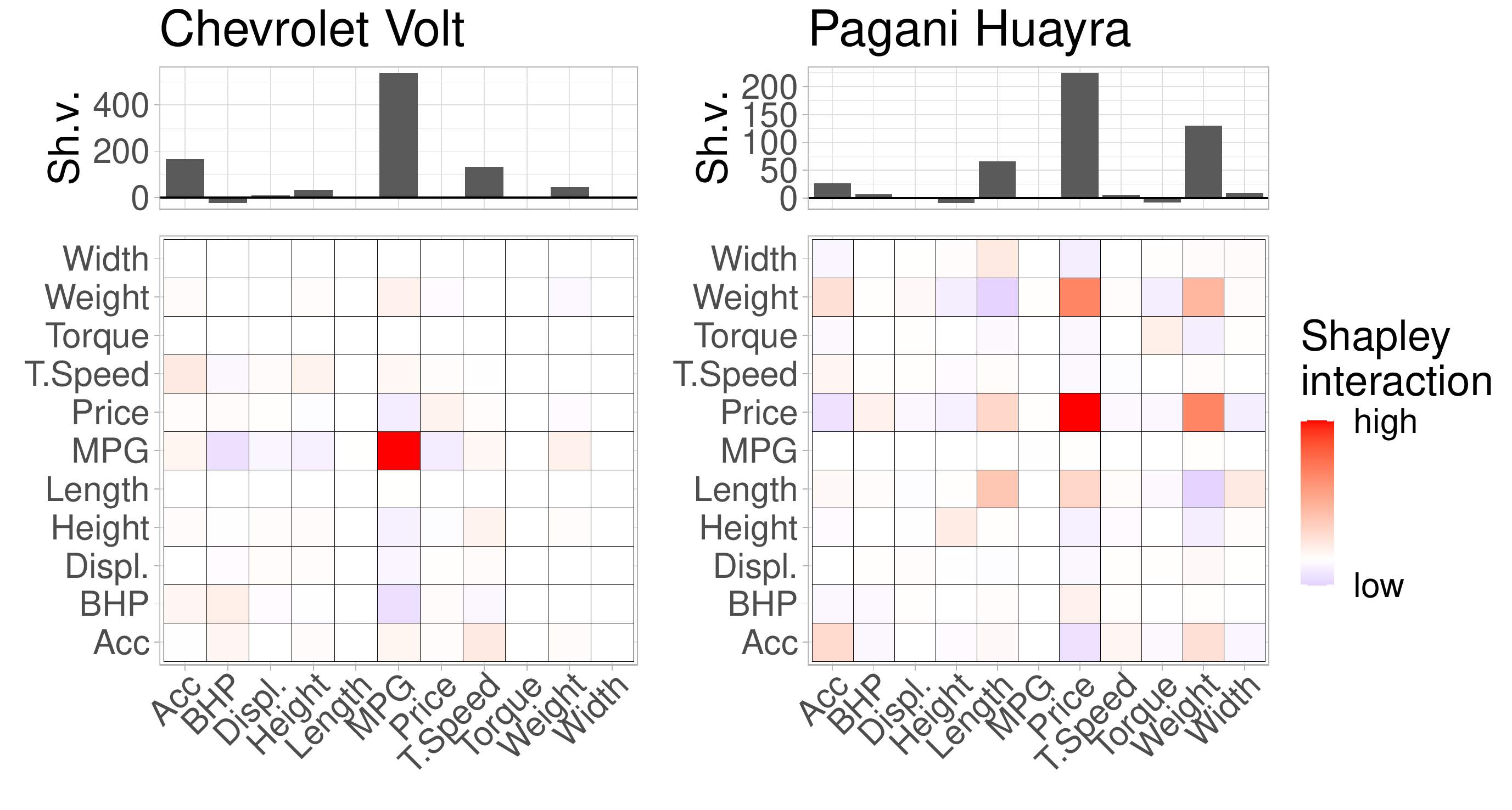}
    \caption{The two graphs in the lower portion of this figure show the Shapley interaction indices $\bm{\Phi}(\bm{x}, \tilde{\bm{\mu}}(\bm{x},S), \Sigma)$ for the \texttt{Chevrolet Volt} and \texttt{Pagani Huayra}, which are computed with respect to the reference point provided by Algorithm~\ref{algorithm:MOE}. The corresponding Shapley values are displayed above the heatmaps.}
    \label{fig:TopGear_interaction}
\end{figure}

\newpage
\subsection{Weather in Vienna}

As a second real-world example, we analyze monthly weather data from the weather station ``Hohe Warte'' in Vienna \citep{weather}. Therefore we we consider 16 numerical attributes, which are described in Table~\ref{tab:weather_parameters} in \ref{section:weather_vienna_parameters}, over a time period spanning from 1955 to 2022.
Furthermore, we restrict our investigation to the three summer months June, July, and August, and compute average values for the considered variables, which yields 68 annual observations for each variable. As for the previous example, we center and scale the data using median and MAD and estimate the covariance using the MCD estimator, before applying the SCD and MOE algorithms using the same setup as before. 

\begin{figure}[ht!]
    \centering
    \includegraphics[width=1\linewidth]{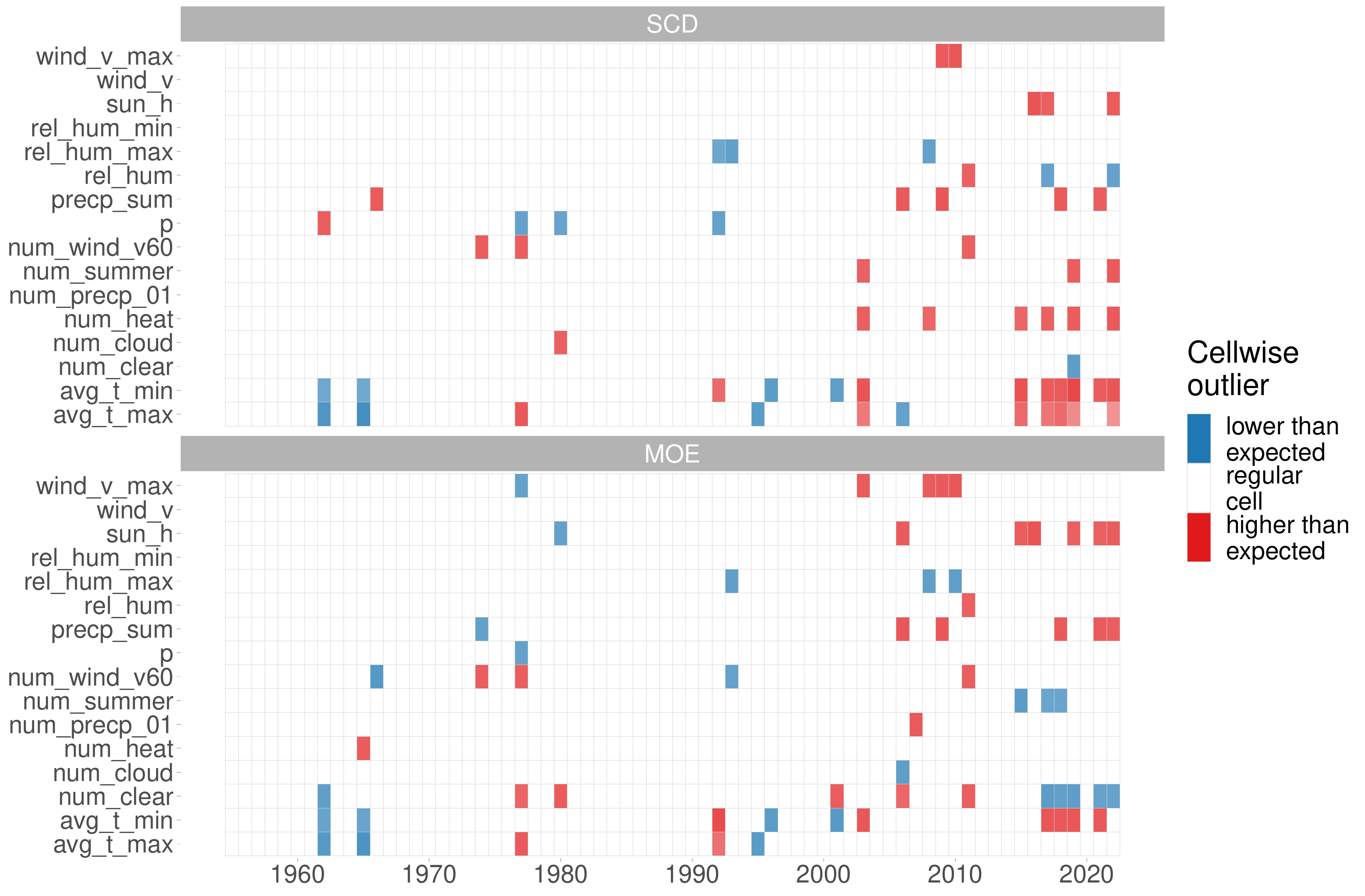}
    \caption{Comparison of outlying cells according to Algorithm~\ref{algorithm:SCD} (top) and Algorithm~\ref{algorithm:MOE} (bottom) for the weather data of Vienna. It is visible in the results of both procedures that the number of anomalies is increasing over the years.}
    \label{fig:weather_tiles}
\end{figure}

Figure~\ref{fig:weather_tiles} displays the outlying cells of the entire 68 years of measurements: The top panel shows the results from the SCD algorithm, and the bottom panel displays those from the MOE algorithm.
Both panels reveal that the number of detected anomalies has increased over the years. 
The SCD procedure further yields results that we would expect to find given that we are currently experiencing an anthropogenic climate change, such as an increasing number of hot days over the years or an increased minimum and maximum mean daily temperature. We emphasize that the SCD procedure results in a global outlyingness measure with respect to the overall mean. On the other hand, the MOE algorithm acts as a local measure: With given values of
the regular cells in a particular year, the outlyingness in the remaining variables is determined.

A more detailed analysis of the results can be made by comparing the Shapley values and pairwise outlyingness scores we obtain from each procedure. Such an analysis is representatively carried out for the year 2021, and the results are displayed in Figure~\ref{fig:weather_interaction}, where we can observe a clear distinction between the results of the SCD and MOE procedure, respectively. 
Both algorithms detect anomalies in the average temperature minimum (\texttt{avg\_t\_min}) and total precipitation (\texttt{precp\_sum}). However, using the local reference point enables the MOE procedure to detect outliers in the number of sun hours (\texttt{sun\_h}) and the number of clear days (\texttt{num\_clear}). 
According to the results of the local MOE procedure given in Figures~\ref{fig:weather_tiles} and \ref{fig:weather_interaction}, the weather of Vienna in 2021 was unusually hot, with more rain than we would expect. When considering the trend of increasing temperature over the years at this specific weather station, we would generally expect fewer sun hours and more clear days than observed in 2021. 

\begin{figure}[ht!]
    \centering
    \includegraphics[width=1\linewidth]{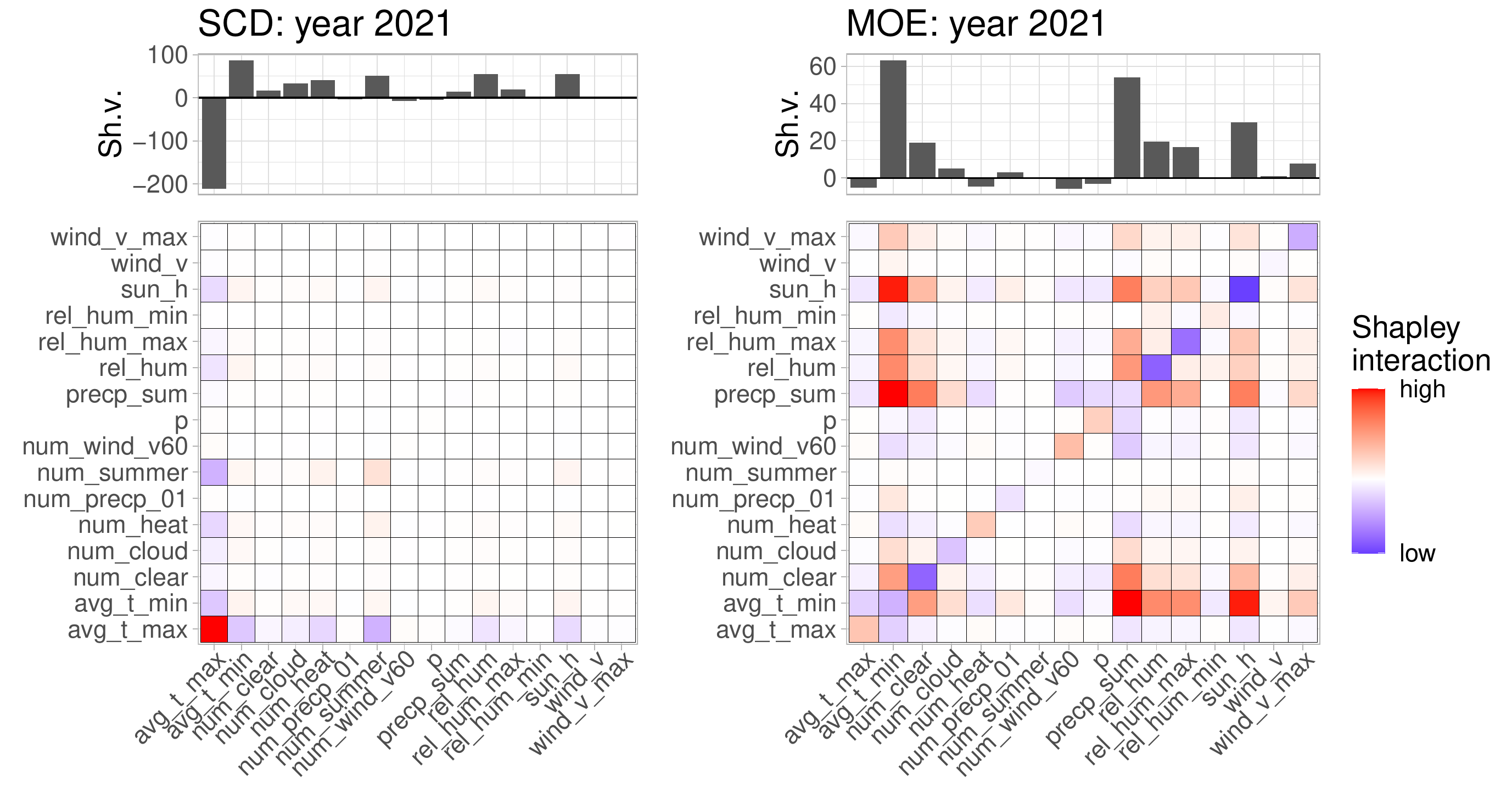}
    \caption{The two graphs in the lower panel show the Shapley interaction indices of the year 2021 for the SCD procedure (left) and the MOE procedure (right). The corresponding Shapley values are displayed above the heatmaps.}
    \label{fig:weather_interaction}
\end{figure}

\section{Discussion and conclusions}
\label{section:conclusion}

This paper introduced Shapley values in connection with Mahalanobis distances for multivariate outlier explanation. 
The Mahalanobis distance is commonly employed for multivariate outlier detection in statistics. Then again, the Shapley value is a concept that originated in cooperative game theory and recently gained popularity in the field of Explainable AI. There it is used to explain the predictions of complex machine learning models by providing information about the contributions of the individual features to a model's prediction.
Combining the Shapley value with the squared Mahalanobis distance enables us to derive outlyingness scores for each coordinate of an observation. Those scores consider all $2^p$ possible combinations of $p$ variables of a single instance and allow us to additively decompose the squared Mahalanobis distance into contributions originating from the individual variables. Without further simplification, the computation would entail evaluating the squared Mahalanobis distance for those $2^p$ combinations, which would pose a substantial computational challenge. However, we showed that our approach leads to a much simpler and computationally efficient form of the Shapley value. 
Moreover, the Shapley interaction indices generalize Shapley values and can be used to derive outlyingness scores for pairs of variables. 

Outlier explanation, and thus identifying the contributions of a variable to the outlyingness of a particular observation, is closely related to cellwise outlyingness, where one aims to identify unusual cells instead of entire observations.
We have adopted cellwise outlyingness into the framework of Shapley values and have proposed two procedures for simultaneous outlier detection and explanation. First, we introduced the SCD procedure as a straightforward implementation of Shapley values for cellwise outlier detection. This algorithm is iteratively replacing anomalous cells with a value towards their mean until the observation is no longer outlying. The more sophisticated MOE procedure takes the information of the non-outlying cells into account and, based on this added input, determines a local reference point.
As a result, one again obtains an additive decomposition of the squared Mahalanobis distance, but with contributions that explain the \textit{local} outlyingness of an observation.

The performance of the two cellwise outlier detection and explanation procedures has been evaluated in simulations and on real-world datasets. It has further been compared to the recently published cellHandler procedure. However, we want to emphasize that the goal of our work is clearly defined as outlier explanation rather than cellwise outlier detection. In particular, Mahalanobis distances rely on a robustly estimated covariance matrix, which has not been in focus in this paper.

We believe that Shapley values are a powerful tool for providing humanly interpretable explanations that allow us to gain further insights into the results of models and methods used in statistics and computer science. They show great potential for further use in this area, especially when a simplification of the computation is possible, as is the case when combining them with Mahalanobis distances. Possible extensions of Shapley values for outlier detection in functional data analysis will be the subject of our future research.

\textbf{Software and data availability:} The methods introduced in this work will be made available in the \texttt{R} package \texttt{ShapleyOutlier} on CRAN, including the weather dataset and a vignette to reproduce the examples presented in Section~\ref{section:applications}.

\section*{Acknowledgements}

The part of the work has been performed in the project AI4CSM under grant agreement No 101007326. The project is co-funded by grants from Germany, Austria, Norway, Belgium, Italy, Netherlands, Czech Republic, Latvia, India and - Electronic Component Systems for European Leadership Joint Undertaking (ECSEL JU).


\appendix

\section{Proof of Theorem \ref{theorem:explain_mahalanobis}}
\label{section:proof_explain_mahalanobis}

\begin{lemma}\label{lemma:shapley_difference}
    The contributions $\Delta_k \md^2(\xhat{S}) = \md^2(\xhat{S\cup\{k\}})-\md^2(\xhat{S})$ 
    can be expressed as
    \begin{equation}
        \Delta_k \md^2(\xhat{S}) = 2(x_k-\mu_k) \left(\sum_{j \in S\cup\{k\}} (x_j-\mu_j) \omega_{jk}\right) - (x_k-\mu_k)^2\omega_{kk}, \label{eq:shapley_difference}
    \end{equation}
    for any subset $S \subseteq P\setminus\{k\}$.
\end{lemma}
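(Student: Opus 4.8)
The plan is to work directly from the element-wise representation of the squared Mahalanobis distance in Equation~\eqref{eq:MD2element} together with the definition of $\xhat{S}$ in Equation~\eqref{eq:definexj}. The key observation is that $\xhatj{S}{j} - \mu_j$ equals $x_j - \mu_j$ when $j \in S$ and equals $0$ otherwise, so the double sum in Equation~\eqref{eq:MD2element} collapses to a sum over indices lying in $S$:
\begin{equation*}
\md^2(\xhat{S}) = \sum_{j \in S}\sum_{l \in S} (x_j-\mu_j)(x_l-\mu_l)\omega_{jl}.
\end{equation*}
This already gives the identity claimed in item~2 of the Remark, and it makes the structure of the marginal contribution transparent.

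Next I would write the analogous expression for $\md^2(\xhat{S\cup\{k\}})$, where (since $k \notin S$) the index set of the double sum is enlarged to $S\cup\{k\}$, and subtract. The terms surviving in $\Delta_k\md^2(\xhat{S}) = \md^2(\xhat{S\cup\{k\}}) - \md^2(\xhat{S})$ are precisely those in which at least one of the two summation indices equals $k$: namely the row $j=k$ with $l$ ranging over $S\cup\{k\}$, and the column $l=k$ with $j$ ranging over $S$ (the pair $j=l=k$ being counted once). Collecting these,
\begin{equation*}
\Delta_k\md^2(\xhat{S}) = (x_k-\mu_k)\sum_{l \in S\cup\{k\}}(x_l-\mu_l)\omega_{kl} + (x_k-\mu_k)\sum_{j \in S}(x_j-\mu_j)\omega_{jk}.
\end{equation*}

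Finally, I would invoke the symmetry $\omega_{jk} = \omega_{kj}$ of the precision matrix $\bm{\Omega} = \bm{\Sigma}^{-1}$ to merge the two sums, after extending the second sum from $S$ to $S\cup\{k\}$ and subtracting the spurious term $(x_k-\mu_k)^2\omega_{kk}$ this introduces. This yields exactly Equation~\eqref{eq:shapley_difference}. There is no real obstacle here beyond careful index bookkeeping, in particular making sure the diagonal term $\omega_{kk}$ is neither double-counted when combining the two sums nor lost when restricting from $S\cup\{k\}$ back to $S$; this is the only place a sign or factor-of-two slip could occur, so I would track that term explicitly throughout.
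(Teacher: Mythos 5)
Your proposal is correct and follows essentially the same route as the paper's proof: expand $\md^2(\xhat{S\cup\{k\}})$ and $\md^2(\xhat{S})$ as double sums restricted to their index sets via Equation~\eqref{eq:definexj}, cancel the common terms so that only those involving the index $k$ survive, and use the symmetry $\omega_{jk}=\omega_{kj}$ to combine them. The only difference is cosmetic: the paper stops at the equivalent form $(x_k-\mu_k)^2\omega_{kk}+2(x_k-\mu_k)\sum_{j\in S}(x_j-\mu_j)\omega_{jk}$, whereas you extend the sum to $S\cup\{k\}$ and subtract the diagonal term to match Equation~\eqref{eq:shapley_difference} verbatim.
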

\begin{proof}
    \begin{align*}
        \Delta_k \md^2(\xhat{S}) &= \md^2(\xhat{S\cup\{k\}})-\md^2(\xhat{S}) \\
        &= (\xhat{S\cup\{k\}}-\bm{\mu})' \bm{\Sigma}^{-1} (\xhat{S\cup\{k\}}-\bm{\mu}) - (\xhat{S}-\bm{\mu})' \bm{\Sigma}^{-1} (\xhat{S}-\bm{\mu})\\
        &= \sum_{j = 1}^p \sum_{l = 1}^p (\xhatj{S\cup\{k\}}{j}-\mu_j) (\xhatj{S\cup\{k\}}{l}-\mu_l) \omega_{jl} - \sum_{j = 1}^p \sum_{l = 1}^p (\xhatj{S}{j}-\mu_j) (\xhatj{S}{l}-\mu_l) \omega_{jl}\\
        &= \sum_{j \in S\cup\{k\}} \sum_{l \in S\cup\{k\}} (x_j-\mu_j) (x_l-\mu_l) \omega_{jl} - \sum_{j \in S} \sum_{l \in S} (x_j-\mu_j) (x_l-\mu_l) \omega_{jl} \\
        &= \sum_{j \in S\cup\{k\}} (x_k-\mu_k) (x_j-\mu_j) \underbrace{\omega_{kj}}_{=\omega_{jk}} + \sum_{j \in S}  (x_k-\mu_k) (x_j-\mu_j) \omega_{jk}\\
        &= (x_k-\mu_k)^2\omega_{kk} + 2(x_k-\mu_k) \sum_{j \in S} (x_j-\mu_j) \omega_{jk} = \eqref{eq:shapley_difference} 
    \end{align*}
\end{proof}

Now that we have derived a simpler form for the contributions $\Delta_k \md^2(\xhat{S})$, we can use this result to rewrite Equation~\eqref{eq:shapley_md_long} for the $k$-th component of the Shapley value $\phi_k(\bm{x})$. We apply Lemma~\ref{lemma:shapley_difference} in the first step of the proof below, and for the purpose of a simpler notation we write
\begin{align*}
    w(\abs{S}) := \frac{\abs{S}!(p-\abs{S}-1)!}{p!},
\end{align*}
for which $\sum_{S \subseteq P\setminus\{k\}} w(\abs{S}) = 1$ holds.
\begin{proof}[Proof of Theorem \ref{theorem:explain_mahalanobis}]
\begin{align*}
    \phi_k(\bm{x}) &= \sum_{S \subseteq P\setminus\{k\}}
    w(\abs{S})
    \Delta_k \md^2(\xhat{S})\\
    &= \sum_{S \subseteq P\setminus\{k\}} w(\abs{S})\Big((x_k-\mu_k)^2\omega_{kk} + 2(x_k-\mu_k) \sum_{j \in S} (x_j-\mu_j) \omega_{jk}\Big)\\
    &= (x_k-\mu_k)^2\omega_{kk} \Bigg(\underbrace{\sum_{S \subseteq P\setminus\{k\}} w(\abs{S})}_{=1}\Bigg) + 2(x_k-\mu_k) \sum_{S \subseteq P\setminus\{k\}} \left(w(\abs{S}) \sum_{j \in S} (x_j-\mu_j) \omega_{jk}\right) \\
    &=(x_k-\mu_k)^2\omega_{kk} + 2(x_k-\mu_k) \sum_{s=1}^{p-1} \Bigg(w(s) \sum_{\substack{S \subseteq P\setminus\{k\}\\ \abs{S} = s}} \sum_{j \in S} (x_j-\mu_j) \omega_{jk}\Bigg) \\
    & = (x_k-\mu_k)^2\omega_{kk} + 2(x_k-\mu_k) \sum_{s=1}^{p-1} \Bigg(w(s) \binom{p-2}{s-1} \sum_{j \in P\setminus\{k\}} (x_j-\mu_j) \omega_{jk}\Bigg) \\
    & = (x_k-\mu_k)^2\omega_{kk} + 2(x_k-\mu_k) \sum_{s=1}^{p-1} \Bigg(\frac{s }{p(p-1)} \sum_{j \in P\setminus\{k\}} (x_j-\mu_j) \omega_{jk}\Bigg) \\
    & = (x_k-\mu_k)^2\omega_{kk} + (x_k-\mu_k)\sum_{j \in P\setminus\{k\}} (x_j-\mu_j) \omega_{jk} \\
    &= (x_k-\mu_k) \sum_{j \in P} (x_j-\mu_j) \omega_{jk} = (x_k-\mu_k)\left( \sum_{j =1}^p (x_j-\mu_j) \omega_{jk}\right)
\end{align*}

\end{proof}

\section{Proof of Theorem \ref{theorem:explain_mahalanobis_interaction}}
\label{section:proof_explain_mahalanobis_interaction}

\begin{proof}[Proof of Theorem \ref{theorem:explain_mahalanobis_interaction}]
To derive the off-diagonal elements defined in Equation~\eqref{eq:shapley_interaction_long}, we start with rewriting $\Delta_{\{j,k\}} \md^2(\xhat{T}), T \subseteq P \setminus \{j,k\}$, by applying Lemma \ref{lemma:shapley_difference}:
\begin{align*}
    \Delta_{\{j,k\}} \md^2(\xhat{T}) 
    =& \big[\md^2(\xhat{T \cup \{j,k\}}) - \md^2(\xhat{T\cup \{j\}})] - \big[\md^2(\xhat{T\cup \{k\}}) - \md^2(\xhat{T})\big]\\
    =& 2(x_k - \mu_k)\left(\sum_{l \in T\cup\{j\}}(x_l-\mu_l)\omega_{jk} - \sum_{l \in T\cup\{k\}}(x_l-\mu_l)\omega_{jk}\right) +  2(x_k-\mu_k)^2\omega_{kk}\\
    =& 2(x_k - \mu_k)\left((x_j-\mu_j) \omega_{jk} - (x_k-\mu_k) \omega_{kk}\right) +  2(x_k-\mu_k)^2\omega_{kk}\\
    =& 2(x_k - \mu_k)(x_j-\mu_j) \omega_{jk}
\end{align*}
Moving on, we plug the result into the formula for $\Phi_{jk}$ for $j \neq k$, given in Equation~\eqref{eq:shapley_interaction_long}, and we obtain
\begin{align*}
    \Phi_{jk} &= \sum_{T \subseteq P \setminus \{j,k\}} \frac{t!(p-t-2)!}{(p-1)!} \Delta_{\{j,k\}} \md^2(\xhat{T})\\
    &= \sum_{T \subseteq P \setminus \{j,k\}} \frac{t!(p-t-2)!}{(p-1)!} 2(x_k - \mu_k)(x_j-\mu_j) \omega_{jk}\\
    &= 2(x_k - \mu_k)(x_j-\mu_j)\omega_{jk},
\end{align*}
where the last equality is obtained by following the same structure as in the proof of Theorem \ref{theorem:explain_mahalanobis}. Finally, we have to derive the diagonal elements $\Phi_{jj}$ given by 
\begin{align*}
    \Phi_{jj} &= \phi_j - \sum_{k \neq j} \Phi_{jk} \\
    &= (x_j-\mu_j) \sum_{k =1}^p (x_k-\mu_k) \omega_{jk} - 2(x_j-\mu_j)\sum_{k \neq j} (x_k - \mu_k) \omega_{jk}\\
    &=  (x_j-\mu_j)^2 \omega_{jj} - (x_j-\mu_j)\sum_{k \neq j} (x_k - \mu_k) \omega_{jk}.
\end{align*}
\end{proof}

\subsection{Higher order interactions}
\label{subsection:proof_explain_mahalanobis_interaction}
\begin{proof}
To show that all interactions of order three or higher are zero, it is sufficient to show that for the three-way interactions the \textit{set function derivative} $\Delta_{\{j,k,l\}} \md^2(\xhat{T})$ is zero for all $T \subseteq P \setminus \{j,k,l\}$. This follows from the iterative definition of the set function derivative for $S \cap \{j\} = \emptyset$ \citep{Grabisch2016}, which is given by
\begin{align*}
    \Delta_{S \cup \{j\}} \md^2(\xhat{T}) = \Delta_{S}( \Delta_{j} \md^2(\xhat{T})).
\end{align*}
Hence, to show that all Shapley interaction indices 
\begin{align*}
    I_{Sh}(v,S)
    = \sum_{T \subseteq P \setminus S} \frac{t!(p-t-s)!}{(p-s+1)!} \Delta_S v(T),
\end{align*}
with $\abs{S} \geq 3$ are zero, we only have to prove that $\Delta_{\{j,k,l\}} \md^2(\xhat{T}) = 0, \forall T \subseteq P \setminus \{j,k,l\}$. For this purpose, we first rewrite the above expression and then apply Lemma~\ref{lemma:shapley_difference}:
\begin{align*}
    \Delta_{\{j,k,l\}} \md^2(\xhat{T}) =&
    \begin{aligned}[t]
        &- \md^2(\xhat{T\cup \{j,k,l\}}) \\
        &+ \md^2(\xhat{T\cup \{j,k\}}) + \md^2(\xhat{T\cup \{j,l\}}) + \md^2(\xhat{T\cup \{k,l\}}) \\
        &- \md^2(\xhat{T\cup \{j\}}) - \md^2(\xhat{T\cup \{k\}}) - \md^2(\xhat{T\cup \{l\}}) \\
        &+ \md^2(\xhat{T})
    \end{aligned}\\
    =&
    \begin{aligned}[t]
        & - \big[\md^2(\xhat{T\cup \{j,k,l\}}) - \md^2(\xhat{T\cup \{k,l\}})\big] \\
        & + \big[\md^2(\xhat{T\cup \{j,l\}}) - \md^2(\xhat{T\cup \{l\}})\big]\\
        & + \big[\md^2(\xhat{T\cup \{j,k\}}) - \md^2(\xhat{T\cup \{j\}}) - \md^2(\xhat{T\cup \{k\}}) + \md^2(\xhat{T}) \big]
    \end{aligned}\\
    =&
    \begin{aligned}[t]
        & - \big[(x_j-\mu_j)^2\omega_{jj} + 2(x_j - \mu_j)\sum_{m \in T\cup\{k,l\}}(x_m-\mu_m)\omega_{jm}\big]\\
        & + \big[(x_j-\mu_j)^2\omega_{jj} + 2(x_j - \mu_j)\sum_{m \in T\cup\{l\}}(x_j-\mu_j)\omega_{jk}\big] \\
        & + \big[2(x_k - \mu_k)(x_j-\mu_j)\big]
    \end{aligned}\\
    =& -2(x_k - \mu_k)(x_j-\mu_j) + 2(x_k - \mu_k)(x_j-\mu_j) = 0
\end{align*}
\end{proof}

\newpage
\section{Weather in Vienna - Parameters}
\label{section:weather_vienna_parameters}
Table~\ref{tab:weather_parameters} shows the parameter descriptions for the \textit{Weather in Vienna} dataset, which have been adapted and translated from \cite{weather}. 

\begin{table}[!h]
    \caption{Description of the parameters of the \textit{Weather in Vienna} dataset}
    \centering
    \begin{tabular}[t]{ll}
        \toprule
        Parameter & Desciption\\
        \midrule
        \texttt{avg\_t\_max} & \begin{minipage}[t]{0.8\textwidth}Mean daily maximum air temperature in °C\end{minipage} \\
        \texttt{avg\_t\_min} & \begin{minipage}[t]{0.8\textwidth}Mean daily minimum air temperature in °C\end{minipage} \\
        \texttt{num\_summer} & \begin{minipage}[t]{0.8\textwidth}Number of summer days \newline (days with a temperature maximum $t_{\text{max}} \geq 25.0$ °C)\end{minipage} \\
        \texttt{num\_heat} & \begin{minipage}[t]{0.8\textwidth}Number of hot days \newline (days with a temperature maximum $t_{\text{max}} \geq 30.0$ °C)\end{minipage} \\
        \addlinespace
        \texttt{p} & \begin{minipage}[t]{0.8\textwidth}Daily mean air pressure in hPa \newline(mean of all measurements at 7 a.m., 2 p.m., 7 p.m. CET; before 1971 9 p.m. instead of 7 p.m.)\end{minipage} \\
        \texttt{sun\_h} & \begin{minipage}[t]{0.8\textwidth}Monthly total sunshine duration in hours\end{minipage} \\
        \texttt{num\_clear} & \begin{minipage}[t]{0.8\textwidth}Number of clear days (daily mean cloudiness $< 20/100$)\end{minipage} \\
        \texttt{num\_cloud} & \begin{minipage}[t]{0.8\textwidth}Number of cloudy days (daily mean cloudiness $> 80/100$)\end{minipage} \\
        \addlinespace
        \texttt{rel\_hum} & \begin{minipage}[t]{0.8\textwidth}Daily mean relative humidity in percent \newline(2 x RH7 mean + RH14 mean + RH19 mean)/4; before 1971 9 p.m. instead of 7 p.m.)\end{minipage} \\
        \texttt{rel\_hum\_max} & \begin{minipage}[t]{0.8\textwidth}Relative humidity maximum in percent\end{minipage} \\
        \texttt{rel\_hum\_min} & \begin{minipage}[t]{0.8\textwidth}Relative humidity minimum in percent\end{minipage} \\
        \texttt{wind\_v} & \begin{minipage}[t]{0.8\textwidth}Monthly average wind speed in km/h\end{minipage} \\
        \addlinespace
        \texttt{num\_wind\_v60} & \begin{minipage}[t]{0.8\textwidth}Number of days with wind peaks $\geq 60$ km/h\end{minipage} \\
        \texttt{wind\_v\_max} & \begin{minipage}[t]{0.8\textwidth}Maximum wind speed in km/h\end{minipage} \\
        \texttt{precp\_sum} & \begin{minipage}[t]{0.8\textwidth}Monthly total precipitation in mm\end{minipage} \\
        \texttt{num\_precp\_01} & \begin{minipage}[t]{0.8\textwidth}Number of days with precipitation $\geq 0.1$ mm \end{minipage}\\
        \bottomrule
    \end{tabular}
    \label{tab:weather_parameters}
\end{table}

\bibliographystyle{apalike}
\bibliography{literature.bib}

\begin{thebibliography}{}

\bibitem[Agostinelli et~al., 2015]{Agostinelli2015}
Agostinelli, C., Leung, A., Yohai, V., and Zamar, R. (2015).
\newblock Robust estimation of multivariate location and scatter in the
  presence of cellwise and casewise contamination.
\newblock {\em Test}, 24:441--461.

\bibitem[Alfons, 2021]{Alfons2021}
Alfons, A. (2021).
\newblock {robustHD}: An {R} package for robust regression with
  high-dimensional data.
\newblock {\em Journal of Open Source Software}, 6(67):3786.

\bibitem[Alqallaf et~al., 2009]{alqallaf2009propagation}
Alqallaf, F., Van~Aelst, S., Yohai, V.~J., and Zamar, R.~H. (2009).
\newblock Propagation of outliers in multivariate data.
\newblock {\em The Annals of Statistics}, pages 311--331.

\bibitem[Biecek and Burzykowski, 2021]{brzemyslaw_2021}
Biecek, P. and Burzykowski, T. (2021).
\newblock {\em {Explanatory Model Analysis}}.
\newblock Chapman and Hall/CRC, New York.

\bibitem[Chandola et~al., 2009]{Chandola2009}
Chandola, V., Banerjee, A., and Kumar, V. (2009).
\newblock Anomaly detection: A survey.
\newblock {\em ACM Comput. Surv.}, 41(3).

\bibitem[Debruyne et~al., 2019]{debruyne2019}
Debruyne, M., Höppner, S., Serneels, S., and Verdonck, T. (2019).
\newblock Outlyingness: Which variables contribute most?
\newblock {\em Statistics and Computing}, 29:707–723.

\bibitem[Filzmoser et~al., 2014]{Filzmoser2014}
Filzmoser, P., Ruiz-Gazen, A., and Thomas-Agnan, C. (2014).
\newblock Identification of local multivariate outliers.
\newblock {\em Statistical Papers}, 55.

\bibitem[Fujimoto et~al., 2006]{Fujimoto2006}
Fujimoto, K., Kojadinovic, I., and Marichal, J.-L. (2006).
\newblock Axiomatic characterizations of probabilistic and
  cardinal-probabilistic interaction indices.
\newblock {\em Games and Economic Behavior}, 55:72--99.

\bibitem[Grabisch, 2016]{Grabisch2016}
Grabisch, M. (2016).
\newblock {\em Set Functions, Games and Capacities in Decision Making}.
\newblock Springer Publishing Company, Incorporated, 1st edition.

\bibitem[Grabisch and Roubens, 1999]{Grabisch1999}
Grabisch, M. and Roubens, M. (1999).
\newblock An axiomatic approach to the concept of interaction among players in
  cooperative games.
\newblock {\em International Journal of Game Theory}, 28:547--565.

\bibitem[Grubbs, 1969]{Grubbs1969}
Grubbs, F.~E. (1969).
\newblock Procedures for detecting outlying observations in samples.
\newblock {\em Technometrics}, 11(1):1--21.

\bibitem[Lundberg et~al., 2020]{lundberg2020}
Lundberg, S.~M., Erion, G., Chen, H., Degrave, A., Prutkin, J.~M., Nair, B.,
  Katz, R., Himmelfarb, J., Bansal, N., Lee, S.-I., and et~al. (2020).
\newblock From local explanations to global understanding with explainable ai
  for trees.
\newblock {\em Nature Machine Intelligence}, 2(1):56–67.

\bibitem[Lundberg et~al., 2018]{lundberg2019}
Lundberg, S.~M., Erion, G.~G., and Lee, S.-I. (2018).
\newblock Consistent individualized feature attribution for tree ensembles.
\newblock {\em arXiv preprint arXiv:1802.03888}.

\bibitem[Lundberg and Lee, 2017]{lundberg2017}
Lundberg, S.~M. and Lee, S.-I. (2017).
\newblock A unified approach to interpreting model predictions.
\newblock In Guyon, I., Luxburg, U.~V., Bengio, S., Wallach, H., Fergus, R.,
  Vishwanathan, S., and Garnett, R., editors, {\em Advances in Neural
  Information Processing Systems 30}, pages 4765--4774. Curran Associates, Inc.

\bibitem[Mahalanobis, 1936]{mahalanobis1936}
Mahalanobis, P.~C. (1936).
\newblock On the generalized distance in statistics.
\newblock {\em Proceedings of the National Institute of Sciences (Calcutta)},
  2:49--55.

\bibitem[Molnar, 2019]{molnar2019}
Molnar, C. (2019).
\newblock {\em Interpretable Machine Learning}.
\newblock \url{https://christophm.github.io/interpretable-ml-book/}.

\bibitem[Peters, 2008]{peters2008}
Peters, H. (2008).
\newblock {\em Game Theory}.
\newblock Springer, Berlin Heidelberg.

\bibitem[Raymaekers and Rousseeuw, 2021]{Raymaekers2021}
Raymaekers, J. and Rousseeuw, P. (2021).
\newblock Handling cellwise outliers by sparse regression and robust
  covariance.
\newblock {\em Journal of Data Science, Statistics, and Visualisation}, 1.

\bibitem[Raymaekers and Rousseeuw, 2022]{Raymaekers2022}
Raymaekers, J. and Rousseeuw, P.~J. (2022).
\newblock The cellwise minimum covariance determinant estimator.

\bibitem[Ribeiro et~al., 2016]{Ribeiro2016}
Ribeiro, M., Singh, S., and Guestrin, C. (2016).
\newblock “why should i trust you?”: Explaining the predictions of any
  classifier.
\newblock pages 97--101.

\bibitem[Rousseeuw, 1985]{Rousseeuw1985}
Rousseeuw, P. (1985).
\newblock Multivariate estimation with high breakdown point.
\newblock {\em Mathematical Statistics and Applications Vol. B}, pages
  283--297.

\bibitem[Rousseeuw and Zomeren, 1990]{Rousseeuw1990}
Rousseeuw, P. and Zomeren, B. (1990).
\newblock Unmasking multivariate outliers and leverage points.
\newblock {\em Journal of The American Statistical Association - J AMER STATIST
  ASSN}, 85:633--639.

\bibitem[Rousseeuw and Bossche, 2018]{Rousseeuw2018}
Rousseeuw, P.~J. and Bossche, W. V.~D. (2018).
\newblock Detecting deviating data cells.
\newblock {\em Technometrics}, 60(2):135--145.

\bibitem[Shapley, 1953]{shapley1953}
Shapley, L.~S. (1953).
\newblock A value for n-person games.
\newblock {\em Contributions to the Theory of Games}, 2(28):307--317.

\bibitem[{Stadt Wien}, 2022]{weather}
{Stadt Wien} (2022).
\newblock {Monthly data from the weather station Hohe Warte since April 1872 -
  Vienna}.

\bibitem[Sundararajan et~al., 2020]{sundararajan2020shapley}
Sundararajan, M., Dhamdhere, K., and Agarwal, A. (2020).
\newblock The shapley taylor interaction index.
\newblock In {\em International Conference on Machine Learning}, pages
  9259--9268. PMLR.

\bibitem[Young, 1985]{young1985}
Young, H. (1985).
\newblock Monotonic solutions of cooperative games.
\newblock {\em International Journal of Game Theory}, 14:65--72.

\bibitem[Zimek and Filzmoser, 2018]{Zimek2018}
Zimek, A. and Filzmoser, P. (2018).
\newblock There and back again: Outlier detection between statistical reasoning
  and data mining algorithms.
\newblock {\em Wiley Interdisciplinary Reviews: Data Mining and Knowledge
  Discovery}, 8:e1280.

\bibitem[Štrumbelj and Kononenko, 2010]{strumbelj2010}
Štrumbelj, E. and Kononenko, I. (2010).
\newblock An efficient explanation of individual classifications using game
  theory.
\newblock {\em Journal of Machine Learning Research}, 11:1--18.

\bibitem[Štrumbelj and Kononenko, 2014]{strumbelj2014}
Štrumbelj, E. and Kononenko, I. (2014).
\newblock Explaining prediction models and individual predictions with feature
  contributions.
\newblock {\em Knowledge and Information Systems}, 41:647–665.

\end{thebibliography}

\end{document}